
\documentclass[smallextended]{svjour3} 
\usepackage{eurosym}
\usepackage{amsfonts, graphicx,amssymb,amsmath,color}
\usepackage[dvipdfm]{hyperref}
\hypersetup{citecolor=blue,colorlinks=true}
\usepackage[authoryear,comma]{natbib}

\textheight 8.5in
\textwidth 6in
\oddsidemargin -0.2in
\topmargin 0.0in

\begin{document}

\title{\textbf{Robust Tests for the Equality of Two Normal Means based on
the Density Power Divergence }}

\author{A. Basu \and A. Mandal \and N. Martin \and L. Pardo}

\institute{
A. Basu \and A. Mandal \at Indian Statistical Institute, Kolkata 700108, India
\and N. Martin \at Department of Statistics, Carlos III University of Madrid,
28903 Getafe (Madrid), Spain
\and L. Pardo \at Department of Statistics and O.R. Complutense University of
Madrid, 28040 Madrid, Spain} 

\titlerunning{Robust Tests for the Equality of Two Normal Means based on DPD} 
\authorrunning{Basu, A.; Mandal, A.; Martin, N. and Pardo, L.}
\date{September 28, 2014}
\maketitle

%
%
\begin{abstract}
Statistical techniques are used in all branches of science to determine the
feasibility of quantitative hypotheses. One of the most basic applications
of statistical techniques in comparative analysis is the test of equality of
two population means, generally performed under the assumption of normality.
In medical studies, for example, we often need to compare the effects of two
different drugs, treatments or preconditions on the resulting outcome. The
most commonly used test in this connection is the two sample $t$-test for
the equality of means, performed under the assumption of equality of
variances. It is a very useful tool, which is widely used by practitioners
of all disciplines and has many optimality properties under the model.
However, the test has one major drawback; it is highly sensitive to
deviations from the ideal conditions, and may perform miserably under model
misspecification and the presence of outliers. In this paper we present a
robust test for the two sample hypothesis based on the density power
divergence measure \citep{MR1665873}, and show that it can be a great
alternative to the ordinary two sample $t$-test. The asymptotic properties
of the proposed tests are rigorously established in the paper, and their
performances are explored through simulations and real data analysis.
\end{abstract}
%
%
\bigskip\bigskip

\noindent\underline{\textbf{AMS 2001 Subject Classification}}\textbf{:} 62F35, 62F03.

\noindent\underline{\textbf{keywords and phrases}}: Robustness, Density Power
Divergence, Hypothesis Testing.
%
%
\section{Introduction: Motivation and Background}
%
%
In many scientific studies, often the main problem of interest is to compare
different population groups. In medical studies, for example, the primary
research problem could be to test for the difference between the location
parameters of two different populations receiving two different drugs, treatments or
therapy, or having two different preconditions. The normal
distribution often provides the basic setup for statistical analyses in 
medical studies (as well as in other disciplines). Inference
procedures based on the sample mean, the standard deviation and the one and
two-sample $t$-tests are often the default techniques for the scenarios where
they are applicable. In particular, the two sample $t$-test is the most
popular technique in testing for the equality of two means, performed under
the assumption of equality of variances. Its applicability in real life
situations is, however, tempered by the known lack of robustness of this
test against model perturbations. Even a small deviation from the ideal
conditions can make the test completely meaningless and lead to nonsensical
results. This problem is caused by the fact that the $t$-test is based on
the classical estimates of the location and scale parameters (the sample
mean and the sample standard deviation). Large outliers tend to distort the 
mean and inflate the standard deviation. This may lead to false results of 
both types, i.e. detecting a difference when there isn't one, and failing to
detect a true significance. 
 
In this paper we are going to develop a class of robust tests for the two
sample problem which evolves from an appropriate minimum distance technique
in a natural way. This class of tests is indexed by two real parameters $%
\beta $ and $\gamma $, and we will constrain each of these parameters to lie
within the $[0,1]$ interval. Our general minimum distance approach will allow
us to study the likelihood ratio test in an asymptotic sense, as the likelihood 
ratio test is asymptotically equivalent to the test generated by the parameters $%
\beta =\gamma =0$. Normally we will
work with the one parameter family of test statistics corresponding to $%
\beta =\gamma $; the outlier stability of the proposed tests increase with
the tuning parameter $\gamma $.

Let $X$ and $Y$ be independent random variables whose distributions are
modeled as normals having unknown means $\mu_1$ and $\mu_2$,
respectively, with an unknown but common variance $\sigma^2$. We are
interested in testing the null hypothesis
%
%
\begin{equation}
H_{0}:\mu_1=\mu_2\text{ against }H_{1}:\mu_1\neq \mu_2,
\label{EQ:0}
\end{equation}
%
%
under the above set up. It is well known that the exact two sample $t$-test (which is equivalent 
to the likelihood ratio test) rejects the null hypothesis in (\ref{EQ:0}) if and only if
%
%
\begin{equation*}
t=\frac{\left\vert \bar{X}-\bar{Y}\right\vert }{S_{p}\sqrt{\frac{1%
}{n_1}+\frac{1}{n_2}}}>t_{\frac{\alpha }{2}}(n_1+n_2-2),
\end{equation*}
%
%
where $\bar{X}$ and $\bar{Y}$ are the sample means corresponding
to the random samples $X_{1},X_{2},\ldots ,X_{n_1}$ and $%
Y_{1},Y_{2},\ldots ,Y_{n_2}$ obtained from the two distributions,
%
%
\begin{equation*}
S_{p}^{2}=\frac{(n_1-1)S_{1}^{2}+(n_2-1)S_{2}^{2}}{n_1+n_2-2},
\end{equation*}
%
%
%
%
\begin{equation*}
S_{1}^{2}=\frac{1}{n_1-1}\sum_{i=1}^{n_1}\left( X_i-\bar{X}%
\right) ^{2},\quad S_{2}^{2}=\frac{1}{n_2-1}\sum_{i=1}^{n_2}\left( Y_{i}-%
\bar{Y}\right) ^{2},
\end{equation*}%
%
%
and $t_{\frac{\alpha }{2}}(n_1+n_2-2)$ is the $100(1-\frac{\alpha }{2})$-th
quantile of the $t$-distribution with $n_1+n_2-2$ degrees of freedom. The $t$-test 
is the uniformly most powerful unbiased and
invariant test for this hypothesis. Testing the equality of means of
independent normal populations with unknown variances which are not
necessarily equal, is referred to as the Behrens-Fisher problem.

In this paper we will use the density power divergence (DPD) measure \citep{MR1665873},
which provides a natural robustness option for many standard
inference problems. The density power divergence and its variants have been
successfully used by many authors in a variety of inference problems; see,
eg. \cite{MR1859416}, \cite{MR2299175,MR2466551}, \cite{MR3011625,basu2013}, \cite{MR3117102}.
However, the two sample problem requires a
non-trivial extension of the currently existing techniques. Our purpose in
this paper is to derive the asymptotic properties of the class of two sample
tests based on the density power divergence and demonstrate their robust
behavior in practical situations.

\bigskip

\noindent \textbf{Example 1 (Cloth Manufacturing data)}: In order to
emphasize the need for applications early, we now present a motivational
example. This example illustrates the use of quality control methods practiced in a
clothing manufacturing plant. Levi-Strauss manufactures clothing from cloth
supplied by several mills. The data used in this example (see Table \ref%
{TAB:Staudte_Sheather}) are for two of these mills and were obtained from
the quality control department of the Levi plant in Albuquerque, New Mexico
(\citealp{lambert1987introduction}, p. 86). In order to maintain the anonymity of these two
mills we have coded them $A$ and $B$. A measure of wastage due to defects in
cloth and so on is called \emph{run-up}. It is quoted as percentage of
wastage per week and is measured relative to computerized layouts of
patterns on the cloth. Since the people working in the plant can often beat
the computer in reducing wastage by laying out the patterns by hand, it is
possible for run-up to be negative. From the viewpoint of quality control,
it is desirable not only that the run-up be small but that the quality from
week to week be fairly consistent. There are 22 measurements on run-up for
each of the two mills and they are presented in Table \ref{TAB:Staudte_Sheather}. 
The $t$-test for the equality of the two means against the two-sided
alternative has a $p$-value of 0.3428 and fails to reject the null
hypothesis; however, when the presumed outliers (presented in bold fonts in
Table \ref{TAB:Staudte_Sheather}) are removed from the dataset, the same
two-sample $t$-test produces a $p$-value of 0.0308, leading to clear
rejection. Choosing $\beta = \gamma$ to be the only parameter, the $p$%
-values of the DPD tests (to be developed in the next section) for testing
the same hypotheses are presented in Figure \ref%
{fig:Staudte_Sheather_book_p_val} as a function of $\gamma$. It is observed
that the $p$-values of the tests with the full data and those with the
outlier deleted data are practically identical for $\gamma = 0.2$ or larger,
and lead to solid rejection. Thus, while the outliers mask the significance
in case of the two sample $t$-test, the more robust DPD tests are able to
capture the same.

%
%
\begin{table}[tbp]
\caption{Cloth Manufacturing data.}
\label{TAB:Staudte_Sheather}
\begin{center}
\begin{tabular}{lrrrrrrrrrrr}
\hline
Mill A & $0.12$ & $1.01$ & $-0.20$ & $0.15$ & $-0.30$ & $-0.07$ & $0.32$ & $%
0.27$ & $-0.32$ & $-0.17$ & $0.24$ \\
& $0.03$ & $0.35$ & $-0.08$ & $\bf{2.94}$ & $0.28$ & $1.30$ & $\bf{4.27}$ & $0.14$ & $%
0.30$ & $0.24$ & $0.13$ \\ \hline
Mill B & $1.64$ & $-0.60$ & $-1.16$ & $-0.13$ & $0.40$ & $1.70$ & $0.38$ & $%
0.43$ & $1.04$ & $0.42$ & $0.85$ \\
& $0.63$ & $0.90$ & $0.71$ & $0.43$ & $1.97$ & $0.30$ & $0.76$ & $\bf{7.02}$ & $%
0.85$ & $0.60$ & $0.29$ \\ \hline
\end{tabular}%
\end{center}
\end{table}
%
%
%
%
\begin{figure}
\centering
{\includegraphics[height=6.5cm, width=14cm]{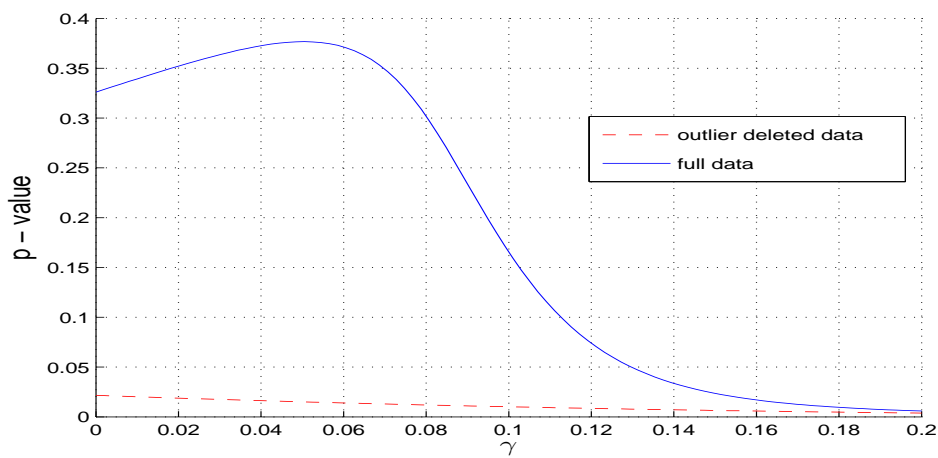}}
\caption{The $p$-values of the DPD tests for the Cloth Manufacturing data for different values of $\gamma$. The solid line represents the full data analysis, while the dashed line represents the outlier deleted case.}
\label{fig:Staudte_Sheather_book_p_val}
\end{figure}
%
%


Our primary motivation for studying the alternatives of the two sample $t$-test has been the 
need for developing such a test in the context of examples relating to medical data. However, 
examples abound in practically all scientific disciplines showing that this is a real necessity
which is certainly not restricted to the medical field. The example considered above is one such, 
where the context does not have anything directly to do with a medical problem, but the importance 
of the problem and the need for a robust solution can immediately be appreciated. 

The rest of the paper is organized as follows: In Section \ref{SEC:MDPDE} the asymptotic distribution
of the minimum DPD estimators in the two sample situation is described. In Section \ref{SEC:Test}
we introduce our robust two sample test statistic and develop the necessary theory. A large number 
of real data examples and extensive simulation results are presented in Section \ref{SEC:numerical}. 
Finally Section \ref{SEC:concluding} has some concluding remarks. 

\section{The Minimum DPD Estimator: Asymptotic Distribution}\label{SEC:MDPDE}

For any two probability density functions $f$ and $g$, the density power
divergence measure is defined, as the function of a single tuning
parameter $\beta \geq 0$, as
\begin{equation}
d_{\beta}(g,f)=\left\{
\begin{array}
[c]{ll}%
\int\left\{  f^{1+\beta}(x)-\left(  1+\frac{1}{\beta}\right)  f^{\beta
}(x)g(x)+\frac{1}{\beta}g^{1+\beta}(x)\right\}  dx, & \text{for}%
\mathrm{~}\beta>0,\\[2ex]%
\int g(x)\log\left(  \displaystyle\frac{g(x)}{f(x)}\right)  dx, &
\text{for}\mathrm{~}\beta=0.
\end{array}
\right.  \label{EQ:definition_DPD}%
\end{equation}
%
%
Let $X_{1},X_{2},\ldots ,X_n$ be a random sample of size $n$
from a $\mathcal{N}(\mu,\sigma^2)$ distribution, where both
parameters are unknown. Let $f_{\mu,\sigma }(x)$ represent the density function of a 
$\mathcal{N}(\mu,\sigma^2)$ variable. For a given $\beta $,
we get the minimum density power divergence estimators (MDPDEs) $\widehat{\mu }%
_{\beta }$ and $\widehat{\sigma }_{\beta}$ of $\mu$ and $\sigma$ by
minimizing the following function over $\mu$ and $\sigma$ 
%
%
\begin{equation}
\int_{\mathbb{R}}f_{\mu,\sigma }^{1+\beta }(x)dx-\left( 1+\frac{1}{%
\beta }\right) \frac{1}{n}\sum_{i=1}^{n}f_{\mu,\sigma }^{\beta
}(X_i),\text{\qquad for }\beta >0,  \label{1}
\end{equation}
%
%
and
%
%
\begin{equation}
-\frac{1}{n}\sum_{i=1}^{n}\log f_{\mu,\sigma }(X_i),\text{%
\qquad for }\beta =0.  \label{EQ:1.0}
\end{equation}
%
%
For $\beta =0$, the objective function in (\ref{EQ:1.0}) is the
negative of the usual log likelihood and has the classical maximum
likelihood estimator as the minimizer. For a
normal density the function in (\ref{1}) simplifies to
%
%
\begin{equation*}
h_{n,\beta }(\mu,\sigma )=\frac{1}{\sigma ^{\beta }(2\pi )^{\frac{%
\beta }{2}}}\left\{ \frac{1}{\left( 1+\beta \right) ^{3/2}}-\frac{1}{%
n\beta }\sum_{i=1}^{n}\exp \left( -\frac{1}{2}\left( \frac{X_i-\mu}{{\sigma }}\right) ^{2}\beta \right) \right\} .
\end{equation*}
%
%
In order to get $\widehat{\mu }_{\beta }$ and $\widehat{\sigma}_\beta $, we have to solve the estimating
equation 
\begin{equation}
\mathbf{h'}_{n,\beta }(\widehat{\mu }_{\beta },\widehat{\sigma}_\beta ) =%
\begin{pmatrix}
_{1}h'_{n,\beta }(\widehat{\mu }_{\beta },\widehat{\sigma}_\beta ) \\
_{2}h'_{n,\beta }(\widehat{\mu }_{\beta },\widehat{\sigma}_\beta )%
\end{pmatrix} = \boldsymbol{0}_2, 
\label{h1}
\end{equation}
where
\begin{equation}
 _{1}h_{n,\beta }^{\prime }(\widehat{\mu }_{\beta },\widehat{\sigma}_\beta ) =\left. \frac{%
\partial h_{n,\beta }(\mu,\widehat{\sigma}_\beta )}{\partial \mu}%
\right\vert _{\mu=\widehat{\mu }_{\beta }},\qquad _{2}h_{n,\beta }^{\prime }(\widehat{\mu }_{\beta },\widehat{\sigma}_\beta  )
=\left. \frac{\partial h_{n,\beta }(\widehat{\mu }_{\beta },\sigma )}{%
\partial \sigma }\right\vert _{\sigma =\widehat{\sigma}_\beta },
\end{equation}
and $\boldsymbol{0}_2$ represents a zero vector of length 2. 
We denote
\begin{equation*}
\mathbf{H}_{n,\beta }( \mu_0,\sigma_0 ) =\left(
\begin{array}{cc}
_{11}h_{n,\beta }^{\prime \prime }\left( \mu_0,\sigma_0\right)  &
_{12}h_{n,\beta }^{\prime \prime }\left( \mu_0,\sigma_0\right)
\\
_{21}h_{n,\beta }^{\prime \prime }\left( \mu_0,\sigma_0\right)  &
_{22}h_{n,\beta }^{\prime \prime }\left( \mu_0,\sigma_0\right)
\end{array}%
\right),
\end{equation*}%
%
%
where
%
%
\begin{align*}
_{11}h_{n,\beta }^{\prime \prime }\left( \mu_0,\sigma_0\right) &
=\left. \dfrac{\partial ^{2}h_{n,\beta }\left( \mu,\sigma
_{0}\right) }{\partial \mu^{2}}\right\vert _{\mu =\mu_0},\qquad
_{12}h_{n,\beta }^{\prime \prime }\left( \mu_0,\sigma_0\right)
=\left. \dfrac{\partial ^{2}h_{n,\beta }\left( \mu,\sigma \right) }{%
\partial \mu\partial \sigma }\right\vert _{\mu =\mu_0,\sigma
=\sigma_0}, \\
_{21}h_{n,\beta }^{\prime \prime }\left( \mu_0,\sigma_0\right) &
=\left. \dfrac{\partial ^{2}h_{n,\beta }\left( \mu,\sigma \right) }{%
\partial \sigma \partial \mu}\right\vert _{\mu =\mu_0,\sigma
=\sigma_0},\qquad _{22}h_{n,\beta }^{\prime \prime }\left( \mu
_{0},\sigma_0\right) =\left. \dfrac{\partial ^{2}h_{n,\beta }\left(
\mu_0,\sigma \right) }{\partial \sigma^2}\right\vert _{\sigma =\sigma
_{0}}.
\end{align*}%
Using a Taylor series expansion of the function in equation (\ref{h1}), it is easy to show that
\begin{eqnarray}
 \sqrt{n}
 \begin{pmatrix}
  \widehat{\mu }_{\beta } - \mu_0\\
  \widehat{\sigma}_\beta - \sigma_0
 \end{pmatrix}
&=& \sqrt{n} \mathbf{H}_{n,\beta }^{-1}( \mu_0,\sigma_0 ) \boldsymbol{h}'_{n,\beta }(\mu_0,\sigma_0 ) + o_p(1) \nonumber\\
&=& \sqrt{n} \mathbf{J}_\beta^{-1}( \sigma_0 ) \boldsymbol{h}'_{n,\beta }(\mu_0,\sigma_0 ) + o_p(1),
\label{muSigma}
\end{eqnarray}
where
\begin{equation}
\boldsymbol{J}_{\beta }(\sigma_0) = \lim_{n \rightarrow \infty }%
\mathbf{H}_{n,\beta } ( \mu_0,\sigma_0 ) 
= \frac{1}{%
\sqrt{1+\beta }\left( 2\pi \right) ^{\beta /2}\sigma_0 ^{2+\beta }}\left(
\begin{array}{cc}
\frac{1}{1+\beta } & 0 \\
0 & \frac{\beta ^{2}+2}{\left( 1+\beta \right) ^{2}}%
\end{array}%
\right).
\end{equation}
The joint distribution of $\widehat{\mu }_{\beta }$ and $\widehat{\sigma}_\beta$ then follows (see \citealp{MR3011625}) from the result that
\begin{equation}
\sqrt{n}\mathbf{h'}_{n,\beta}(\mu _0,\sigma_0)\underset{%
n\rightarrow \infty }{\overset{\mathcal{L}}{\longrightarrow }}\mathcal{N}%
\left( \boldsymbol{0}_{2},\boldsymbol{K}_{\beta }(\sigma_0)\right) ,
\label{1.1}
\end{equation}
where
\begin{eqnarray}
\boldsymbol{K}_{\beta }(\sigma_0) &=& \left( K_{ij,\beta
}(\sigma_0)\right) _{i,j=1,2}  \nonumber\\
&=&
\frac{1}{\sigma_0 ^{2+2\beta
}\left( 2\pi \right) ^{\beta }}\left( \frac{1}{(1+2\beta )^{3/2}}\left(
\begin{array}{cc}
1 & 0 \\
0 & \frac{4\beta ^{2}+2}{1+2\beta }%
\end{array}%
\right) -\left(
\begin{array}{cc}
0 & 0 \\
0 & \frac{\beta ^{2}}{(1+\beta )^{3}}%
\end{array}%
\right) \right) .
\label{2}
\end{eqnarray}
We will use the above results to obtain the MDPDEs of the parameters in the two sample setup mentioned below.

Suppose $X_{1},X_{2},\ldots ,X_{n_1}$ is a random sample of size $n_1$
from $X$ which has a $\mathcal{N}(\mu_1,\sigma^2)$ distribution, and $%
Y_{1},Y_{2},\ldots ,Y_{n_2}$ is a random sample of size $n_2$ from $Y$
which has a $\mathcal{N}(\mu_2,\sigma^2)$ distribution; all three
parameters are unknown. Let $f_{\mu_1,\sigma }(x)$ and $f_{\mu
_{2},\sigma }(y)$ be the density functions of $X$ and $Y$ respectively.
Let us denote the set of unknown parameters by $\boldsymbol{\eta }=(\mu_1,\mu
_{2},\sigma )^{T}$. The MDPDE of $\boldsymbol{\eta }$, denoted by 
$\widehat{\boldsymbol{\eta }}_{\beta }=(\widehat{\mu}_{1\beta },\widehat{\mu}_{2\beta
},\widehat{\sigma}_{\beta })^{T}$, is obtained by minimizing the following function
%
%
\begin{equation}
h_{n_1,n_2,\beta }(\boldsymbol{\eta })=\frac{1}{n_1+n_2}%
\left( n_{1\text{ }}h_{n_1,\beta }(\mu_1,\sigma )+n_{2\text{ }%
}h_{n_2,\beta }\left( \mu_2,\sigma \right) \right) .
\label{hn12}
\end{equation}
%
%
It may be noticed that $\hat{\mu}_{1\beta }$ is based only on the first term of the above function,
and similarly $\hat{\mu}_{2\beta }$ depends only on the second term. Therefore, the estimating 
equations are given by $_{1}h_{n_{i},\beta }^{\prime }\left( \mu _{i},\sigma \right) =0$,
$i=1,2$, and $_{2}h_{n_1,n_2,\beta }^{\prime }(\boldsymbol{\eta } )=0$, 
where
\begin{equation}
_{2}h_{n_1,n_2,\beta }^{\prime }(\boldsymbol{\eta })=\frac{%
\partial h_{n_1,n_2,\beta }(\boldsymbol{\eta } )}{\partial \sigma }%
=\frac{1}{n_1+n_2}\left( n_1\,\allowbreak _{2}h_{n_1,\beta }^{\prime
}\left( \mu_1,\sigma \right) +n_2\,\allowbreak _{2}h_{n_2,\beta
}^{\prime }\left( \mu_2,\sigma \right) \right) .
\label{h2n12}
\end{equation}%
%
%

For $\beta =0$, the above equations can be explicitly solved to get the
MDPDEs for this case. It is easily seen that $\widehat{\mu }_{10}=\bar{X%
}$ and $\widehat{\mu }_{20}=\bar{Y}$. Moreover, using equation (\ref{EQ:1.0})
we get from (\ref{hn12}) 
%
%
\begin{align*}
& h_{n_1,n_2,\beta =0}(\widehat{\boldsymbol{\eta }}_0 ) \\
& =-\frac{1}{n_1+n_2}\left( n_1\frac{1}{n_1}\log
\prod_{i=1}^{n_1}f_{\widehat{\mu }_{10},\widehat{\sigma}_0 }(X_i)+n_2\frac{1}{n_2}\log
\prod_{i=1}^{n_2}f_{\widehat{\mu }_{20},\widehat{\sigma}_0 }(Y_{i})\right) \\
& =\frac{1}{n_1+n_2}\left( (n_1+n_2)\log \widehat{\sigma}_0 +\sum_{i=1}^{n_1}%
\frac{\left( X_i-\bar{X}\right) ^{2}}{2\widehat{\sigma}_0^2}+\sum_{i=1}^{n_2}%
\frac{\left( Y_{i}-\bar{Y}\right) ^{2}}{2\widehat{\sigma}_0^2}+(n_1+n_2)\log
\sqrt{2\pi }\right) .
\end{align*}
%
%
So,
%
%
\begin{equation*}
_{2}h_{n_1,n_2,\beta }^{\prime }(\widehat{\boldsymbol{\eta }}_0 ) 
=\frac{1}{\widehat{\sigma}_0 }-\frac{1}{%
\widehat{\sigma}_0 ^{3}(n_1+n_2)}\left\{ (n_1-1)S_{1}^{2} + (n_2-1)S_{2}^{2}\right\}  ,
\end{equation*}%
%
%
which leads to the solution
%
%
\begin{equation}
\widehat{\sigma }_{0}=\left( \frac{(n_1-1)S_{1}^{2}+(n_2-1)S_{2}^{2}}{%
n_1+n_2}\right) ^{\frac{1}{2}}.  \label{sig0}
\end{equation}
Therefore, for $\beta=0$ the MDPDEs turn out to be the MLEs of the corresponding parameters. The 
following theorem gives the asymptotic distribution of the MDPDE of $\boldsymbol{\eta }$ for a 
given $\beta$.
\begin{theorem}
\label{Th0}We consider two normal populations with unknown means $\mu_1$
and $\mu_2$ and unknown but common variance $\sigma^2.$ Let
\begin{equation}
w=\lim_{n_1,n_2\rightarrow \infty }\frac{n_1}{n_1+n_2}
\label{EQ:w}
\end{equation}
%
%
be the limiting proportion of observations from the first population in
the whole sample. We assume that $w \in (0,1)$. Then, the minimum density power divergence estimator $\widehat{\boldsymbol{\eta }}_{\beta }$ of $\boldsymbol{\eta}$ 
has the asymptotic distribution given by
%
%
\begin{equation}
\sqrt{\frac{n_1n_2}{n_1+n_2}}(\widehat{\boldsymbol{\eta }}_{\beta }-%
\boldsymbol{\eta }_{0})\underset{n_1,n_2\rightarrow \infty }{\overset{%
\mathcal{L}}{\longrightarrow }}\mathcal{N}\left( \boldsymbol{0}_{3},%
\boldsymbol{\Sigma }_{w,\beta }(\sigma_0)\right) ,  \label{eqTh0}
\end{equation}%
%
%
where $\boldsymbol{\eta }_{0}=(\mu _{10},\mu _{20},\sigma_0)^{T}$ is the
true value of $\boldsymbol{\eta }$, and
%
%
\begin{equation}
\boldsymbol{\Sigma }_{w,\beta }(\sigma_0)=\sigma_0^{2}\left(
\begin{array}{ccc}
\left( 1-w\right) \frac{\left( \beta +1\right) ^{3}}{\left( 2\beta +1\right)
^{\frac{3}{2}}} & 0 & 0 \\
0 & w\frac{\left( \beta +1\right) ^{3}}{\left( 2\beta +1\right) ^{\frac{3}{2}%
}} & 0 \\
0 & 0 & w\left( 1-w\right) \frac{\left( \beta +1\right) ^{5}}{\left( \beta
^{2}+2\right) ^{2}}\left( \frac{4\beta ^{2}+2}{(1+2\beta )^{5/2}}-\frac{%
\beta ^{2}}{(1+\beta )^{3}}\right)%
\end{array}%
\right) .  \label{8}
\end{equation}%

\end{theorem}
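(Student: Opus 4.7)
The plan is to mimic the single-sample Taylor-expansion argument already outlined in the excerpt (equation (\ref{muSigma})) in the three-dimensional setting, exploiting the independence of the two samples. Write the estimating equations as a single vector equation $\boldsymbol{G}_{n_1,n_2,\beta}(\widehat{\boldsymbol{\eta}}_\beta)=\boldsymbol{0}_3$, where
\begin{equation*}
\boldsymbol{G}_{n_1,n_2,\beta}(\boldsymbol{\eta})=\bigl(\,{}_1h'_{n_1,\beta}(\mu_1,\sigma),\; {}_1h'_{n_2,\beta}(\mu_2,\sigma),\; {}_2h'_{n_1,n_2,\beta}(\boldsymbol{\eta})\bigr)^T.
\end{equation*}
Expanding about $\boldsymbol{\eta}_0$ and using standard consistency gives $\widehat{\boldsymbol{\eta}}_\beta-\boldsymbol{\eta}_0 = -\boldsymbol{J}^{-1}_{n_1,n_2,\beta}(\boldsymbol{\eta}_0)\,\boldsymbol{G}_{n_1,n_2,\beta}(\boldsymbol{\eta}_0)+o_p\!\bigl(\|\cdot\|\bigr)$, where $\boldsymbol{J}_{n_1,n_2,\beta}$ is the Jacobian of $\boldsymbol{G}_{n_1,n_2,\beta}$.

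Step one is to identify the limiting Jacobian. Because ${}_1h'_{n_i,\beta}(\mu_i,\sigma)$ depends only on the $i$-th sample, the $(1,2)$ and $(2,1)$ entries of the Jacobian vanish exactly. The $(1,3)$ and $(2,3)$ entries are of the form ${}_{12}h''_{n_i,\beta}(\mu_{i0},\sigma_0)$, which converges in probability to the $(1,2)$ entry of $\boldsymbol{J}_\beta(\sigma_0)$, i.e.\ $0$. The $(3,1)$ and $(3,2)$ entries are ${}_{21}h''_{n_i,\beta}$ weighted by $n_i/(n_1+n_2)$ and likewise tend to $0$. The diagonal entries converge to
\begin{equation*}
J^*_{11}=J^*_{22}=\frac{1}{(1+\beta)^{3/2}(2\pi)^{\beta/2}\sigma_0^{2+\beta}},\qquad J^*_{33}=\frac{\beta^2+2}{(1+\beta)^{5/2}(2\pi)^{\beta/2}\sigma_0^{2+\beta}},
\end{equation*}
where the third limit is obtained from $\tfrac{n_1}{n_1+n_2}\,{}_{22}h''_{n_1,\beta}+\tfrac{n_2}{n_1+n_2}\,{}_{22}h''_{n_2,\beta}$ using $w+(1-w)=1$.

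Step two is to find the limit of $\sqrt{n_1n_2/(n_1+n_2)}\,\boldsymbol{G}_{n_1,n_2,\beta}(\boldsymbol{\eta}_0)$. Rewriting the first component as $\sqrt{n_2/(n_1+n_2)}\cdot\sqrt{n_1}\,{}_1h'_{n_1,\beta}(\mu_{10},\sigma_0)$ and invoking (\ref{1.1}) for sample $1$ gives a $\mathcal{N}\bigl(0,(1-w)K_{11,\beta}(\sigma_0)\bigr)$ limit; the second component is analogous from sample $2$ with weight $w$, and the two are independent since the samples are. The third component splits as the sum of $\tfrac{n_1}{n_1+n_2}\sqrt{n_1n_2/(n_1+n_2)}\,{}_2h'_{n_1,\beta}(\mu_{10},\sigma_0)$ and the analogous sample-$2$ quantity; by independence, the variance of the limit is $w^2(1-w)K_{22,\beta}(\sigma_0)+(1-w)^2 w\,K_{22,\beta}(\sigma_0)=w(1-w)K_{22,\beta}(\sigma_0)$. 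Finally, the cross covariance of component three with components one and two is zero, because within each sample $K_{12,\beta}(\sigma_0)=0$ by (\ref{2}), and between samples independence yields zero. Hence the scaled score converges to $\mathcal{N}(\boldsymbol{0}_3,\boldsymbol{K}^*)$ with $\boldsymbol{K}^*=\mathrm{diag}\bigl((1-w)K_{11,\beta},\,w K_{11,\beta},\,w(1-w)K_{22,\beta}\bigr)$.

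The proof concludes by the delta-method style sandwich $(J^*)^{-1}\boldsymbol{K}^*(J^*)^{-1}$, whose three diagonal entries simplify, using (\ref{2}), to the entries of $\boldsymbol{\Sigma}_{w,\beta}(\sigma_0)$ displayed in (\ref{8}); the off-diagonals remain zero because both $J^*$ and $\boldsymbol{K}^*$ are diagonal. The main obstacle is bookkeeping rather than conceptual: one has to carefully track the weights $n_i/(n_1+n_2)$ in both the score and Jacobian while ensuring that the $o_p$ remainders from the Taylor expansion multiplied by the $\sqrt{n_1n_2/(n_1+n_2)}$ scaling remain $o_p(1)$; this requires joint consistency of $\widehat{\boldsymbol{\eta}}_\beta$, which follows from the identifiability of $(\mu_i,\sigma)$ in each $h_{n_i,\beta}$ together with the uniform convergence of the second derivatives on a neighbourhood of $\boldsymbol{\eta}_0$ (standard since $f_{\mu,\sigma}$ is smooth and bounded together with all its derivatives used).
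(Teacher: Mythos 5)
Your proof is correct and takes essentially the same route as the paper's: a Taylor expansion of the estimating equations around $\boldsymbol{\eta}_0$, the single-sample score asymptotics in (\ref{1.1}), the vanishing of the cross entries of the limiting Jacobian and score covariance, and independence of the two samples, yielding the sandwich whose diagonal entries are exactly those of (\ref{8}). The only difference is organizational --- you stack all three estimating equations into one vector and apply the sandwich formula once, whereas the paper treats $\widehat{\mu}_{1\beta}$, $\widehat{\mu}_{2\beta}$ and $\widehat{\sigma}_{\beta}$ separately and then assembles the joint covariance from the marginal limits and the zero covariances; the substantive computations (e.g., the $w^{2}(1-w)+(1-w)^{2}w=w(1-w)$ weighting of $K_{22,\beta}$) coincide.
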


\begin{proof}
See Appendix.
\hspace*{\fill}\bigskip
\end{proof}

\section{The Asymptotic Distribution of the DPD Test Statistic}\label{SEC:Test}

Let $f_{\mu_1,\sigma _{1}}(x)$ and $f_{\mu_2,\sigma _{2}}(y)$ be the
density functions of $X\sim \mathcal{N}(\mu_1,\sigma _{1})$ and $Y\sim
\mathcal{N}(\mu_2,\sigma _{2})$ respectively. The density power divergence
measure between the densities of $X$ and $Y$, for $\gamma >0$, is given by
%
%
\begin{align*}
d_{\gamma }(f_{\mu_1,\sigma _{1}},f_{\mu_2,\sigma _{2}}) =&\frac{1}{%
\sigma _{2}^{\gamma }\sqrt{1+\gamma }\left( 2\pi \right) ^{\gamma /2}}+%
\frac{1}{\gamma\sigma _{1}^{\gamma }\sqrt{1+\gamma }\left( 2\pi \right)
^{\gamma /2}} \\
&  -\frac{\gamma +1}{\gamma \sigma _{2}^{\gamma -1}(\gamma \sigma
_{1}^{2}+\sigma _{2}^{2})^{1/2}\left( 2\pi \right) ^{\gamma/2 }} \\
&  \times \exp \left\{ \frac{1}{2}\left[ -\left( \tfrac{\mu_2^{2}}{\left(
\frac{\sigma _{2}}{\sqrt{\gamma }}\right) ^{2}}+\tfrac{\mu_1^{2}}{\sigma
_{1}^{2}}\right) +\tfrac{\left( \sigma _{1}^{2}\mu_2+\mu_1\left( \frac{%
\sigma _{2}}{\sqrt{\gamma }}\right) ^{2}\right) ^{2}}{\left( \sigma
_{1}^{2}+\left( \frac{\sigma _{2}}{\sqrt{\gamma }}\right) ^{2}\right) \left(
\frac{\sigma _{2}}{\sqrt{\gamma }}\right) ^{2}\sigma _{1}^{2}}\right]
\right\} ,
\end{align*}
%
%
and for $\gamma =0$%
%
%
\begin{equation*}
d_\gamma(f_{\mu_1,\sigma _{1}},f_{\mu_2,\sigma _{2}})=\log {\frac{\sigma
_{2}}{\sigma _{1}}}-\frac{1}{2}+\frac{\sigma _{1}^{2}}{2\sigma _{2}^{2}}+%
\frac{1}{2\sigma _{2}^{2}}(\mu_1-\mu_2)^{2}.
\end{equation*}%
%
%
To test the null hypothesis given in (\ref{EQ:0}), under the
assumption that $\sigma _{1}=\sigma _{2}=\sigma $, we will consider the
divergence between the two normal populations with the estimated parameters;
this yields
\begin{equation}
d_{\gamma }(f_{\widehat{\mu }_{1\beta },\widehat{\sigma}_\beta },f_{%
\widehat{\mu }_{2\beta },\widehat{\sigma}_\beta })=\left\{
\begin{array}{ll}
\frac{\sqrt{1+\gamma }}{\gamma \left( \sqrt{2\pi }\widehat{\sigma }_{\beta
}\right) ^{\gamma }}\left[ 1-\exp \left\{ -\frac{\gamma }{2(\gamma
+1)}\left( \frac{\widehat{\mu }_{1\beta }-\widehat{\mu }_{2\beta }}{\widehat{%
\sigma }_{\beta }}\right) ^{2}\right\} \right] , & \text{ for }\gamma >0, \\
\frac{1}{2}\left( \frac{\widehat{\mu }_{1\beta }-\widehat{\mu }_{2\beta }}{%
\widehat{\sigma}_\beta }\right) ^{2}, &  \text{ for }\gamma =0.%
\end{array}%
\right.  \label{EQ:initial_statistic}
\end{equation}%
%
%
Naturally, we will reject the null hypothesis for large values of $d_{\gamma }(f_{\widehat{\mu }%
_{1\beta },\widehat{\sigma}_\beta },f_{\widehat{\mu }_{2\beta },\widehat{%
\sigma }_{\beta }})$. 
To propose the test in a very general setup we have considered two possibly distinct tuning parameters $\gamma$
and $\beta$ in the 
above expression; the parameter $\gamma$ represents the tuning parameters of the divergence, and the parameter 
$\beta$ represents the tuning parameter of the MDPDEs. 
In order to determine the critical region of this
test we will find (later in Theorem \ref{Th2}) the asymptotic null distribution of the test statistic based on (\ref%
{EQ:initial_statistic}), standardized with a suitable scaling constant involving $n_1$ and $n_2$.

\begin{theorem}
\label{Th1} For $\gamma >0$, let us define
$\boldsymbol{t}_{\gamma }\left( \boldsymbol{\eta }\right) =(t_{\gamma, 1}(%
\boldsymbol{\eta }),t_{\gamma, 2}(\boldsymbol{\eta }),t_{\gamma, 3}(%
\boldsymbol{\eta }))^{T}$, with%
%
%
\begin{align}
t_{\gamma, 1}(\boldsymbol{\eta })& =\frac{\frac{\mu_1-\mu_2}{\sigma }}{%
\sqrt{1+\gamma }\left( \sqrt{2\pi }\right) ^{\gamma }\sigma ^{\gamma +1}}%
\exp \left\{ -\frac{1}{2}\tfrac{\gamma }{\gamma +1}\left( \tfrac{\mu
_{1}-\mu_2}{\sigma }\right) ^{2}\right\} ,  \label{t1} \\
t_{\gamma, 2}(\boldsymbol{\eta })& =-t_{1}(\boldsymbol{\eta }),  \label{t2} \\
t_{\gamma, 3}(\boldsymbol{\eta })& = - \tfrac{\sqrt{1+\gamma }}{\left( \sqrt{%
2\pi }\right) ^{\gamma }\sigma ^{\gamma +1}}\left[ 1-\left( 1 - \tfrac{1}{%
1+\gamma }\left( \tfrac{\mu_1-\mu_2}{\sigma }\right) ^{2}\right) \exp
\left\{ -\tfrac{1}{2}\tfrac{\gamma }{\gamma +1}\left( \tfrac{\mu_1-\mu
_{2}}{\sigma }\right) ^{2}\right\} \right] .  \label{t3}
\end{align}%
%
%
Then, for $w \in (0,1)$  as defined in (\ref{EQ:w}) we have
\begin{equation}
\sqrt{\frac{n_1n_2}{n_1+n_2}}\left( d_{\gamma }(f_{\widehat{\mu }%
_{1\beta },\widehat{\sigma}_\beta },f_{\widehat{\mu }_{2\beta },\widehat{%
\sigma }_{\beta }})-d_{\gamma }(f_{\mu _{10},\sigma_0},f_{\mu
_{20},\sigma_0})\right) \underset{n_1,n_2\rightarrow \infty }{\overset%
{\mathcal{L}}{\longrightarrow }}\mathcal{N}\left( 0, \sigma_\gamma^2
\right) ,  \label{resTh1}
\end{equation}
%
%
where 
%
%
\begin{equation}
 \sigma_\gamma^2 = \boldsymbol{t}_{\gamma
}^{T}\left( \boldsymbol{\eta }_{0}\right) \boldsymbol{\Sigma }_{w,\beta
}(\sigma_0)\boldsymbol{t}_{\gamma }\left( \boldsymbol{\eta }_{0}\right),
\label{sigma_gamma}
\end{equation}
%
%
and $\boldsymbol{\Sigma }_{w,\beta }(\sigma_0)$ is given in (\ref{8}).
\end{theorem}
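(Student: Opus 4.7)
The plan is to apply the multivariate delta method to the smooth functional $g(\boldsymbol{\eta}) := d_{\gamma}(f_{\mu_1,\sigma}, f_{\mu_2,\sigma})$ evaluated at $\widehat{\boldsymbol{\eta}}_{\beta}$, using Theorem~\ref{Th0} as the input asymptotic normality result. Since $(\mu_1,\mu_2,\sigma) \mapsto g(\boldsymbol{\eta})$ is an infinitely differentiable function on the open half-space $\{\sigma > 0\}$, the delta method yields
\[
\sqrt{\frac{n_1 n_2}{n_1+n_2}}\bigl(g(\widehat{\boldsymbol{\eta}}_{\beta}) - g(\boldsymbol{\eta}_0)\bigr) \xrightarrow{\mathcal{L}} \mathcal{N}\bigl(0,\; \nabla g(\boldsymbol{\eta}_0)^T \boldsymbol{\Sigma}_{w,\beta}(\sigma_0)\, \nabla g(\boldsymbol{\eta}_0)\bigr),
\]
so the entire task reduces to showing that $\nabla g(\boldsymbol{\eta}) = \boldsymbol{t}_{\gamma}(\boldsymbol{\eta})$.

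First, I would write the divergence in the compact form
\[
g(\boldsymbol{\eta}) = \frac{\sqrt{1+\gamma}}{\gamma(\sqrt{2\pi}\,\sigma)^{\gamma}} \bigl[1 - \exp(-A(\boldsymbol{\eta})/\sigma^2)\bigr],\qquad A(\boldsymbol{\eta}) = \tfrac{\gamma}{2(\gamma+1)}(\mu_1-\mu_2)^2,
\]
so that the mean and scale dependence decouple neatly. Differentiation in $\mu_1$ pulls down a factor $\partial_{\mu_1}(-A/\sigma^2) = -\tfrac{\gamma}{(\gamma+1)}(\mu_1-\mu_2)/\sigma^2$ from the exponential; after cancelling one factor of $\gamma$ against the prefactor $1/\gamma$, one recovers precisely formula~(\ref{t1}) for $t_{\gamma,1}(\boldsymbol{\eta})$. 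The identity $t_{\gamma,2}(\boldsymbol{\eta}) = -t_{\gamma,1}(\boldsymbol{\eta})$ in~(\ref{t2}) then follows immediately from the exchange symmetry $\mu_1 \leftrightarrow \mu_2$.

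The $\sigma$-derivative is the most tedious piece and is where I expect the main bookkeeping obstacle to lie: one must combine the contribution from differentiating $\sigma^{-\gamma}$ with the contribution from differentiating $\exp(-A/\sigma^2)$, and then simplify using $2A/(\gamma \sigma^2) = (\mu_1-\mu_2)^2/((\gamma+1)\sigma^2)$. Carrying out the algebra and factoring $-\sqrt{1+\gamma}/((\sqrt{2\pi})^{\gamma} \sigma^{\gamma+1})$ outside the bracket reproduces exactly formula~(\ref{t3}) for $t_{\gamma,3}(\boldsymbol{\eta})$; the sign and the $\bigl(1 - (\mu_1-\mu_2)^2/((\gamma+1)\sigma^2)\bigr)$ factor inside the brackets must be tracked carefully.

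Finally, I would note that under the hypotheses $w \in (0,1)$ and $\sigma_0>0$, the gradient $\nabla g$ is continuous at $\boldsymbol{\eta}_0$, and Theorem~\ref{Th0} supplies $\widehat{\boldsymbol{\eta}}_{\beta} \xrightarrow{P} \boldsymbol{\eta}_0$ together with $\sqrt{n_1 n_2/(n_1+n_2)}(\widehat{\boldsymbol{\eta}}_{\beta} - \boldsymbol{\eta}_0) \Rightarrow \mathcal{N}(\boldsymbol{0}_3, \boldsymbol{\Sigma}_{w,\beta}(\sigma_0))$, so all regularity conditions for the delta method are met. Substituting $\nabla g(\boldsymbol{\eta}_0) = \boldsymbol{t}_{\gamma}(\boldsymbol{\eta}_0)$ into the general delta-method conclusion yields the variance formula $\sigma_{\gamma}^2$ in~(\ref{sigma_gamma}), completing the proof.
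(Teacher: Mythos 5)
Your proposal is correct and follows essentially the same route as the paper: the paper's proof is exactly a first-order Taylor expansion of $d_{\gamma}(f_{\widehat{\mu }_{1\beta },\widehat{\sigma}_\beta },f_{\widehat{\mu }_{2\beta },\widehat{\sigma}_\beta })$ around $\boldsymbol{\eta }_{0}$ (i.e., the delta method) combined with Theorem~\ref{Th0}, with $\boldsymbol{t}_{\gamma}(\boldsymbol{\eta}_0)$ identified as the gradient of the divergence. Your gradient computations check out, so the only difference is that you spell out the differentiation that the paper leaves implicit.
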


\begin{proof}
See Appendix.
\hspace*{\fill}\bigskip
\end{proof}

Notice that $\boldsymbol{t}_{\gamma }^{T}\left( \boldsymbol{\eta }%
_{0}\right) \boldsymbol{\Sigma }_{w,\beta }(\sigma_0)\boldsymbol{t}%
_{\gamma }\left( \boldsymbol{\eta }_{0}\right) \geq 0$. If $\mu _{10}\neq
\mu _{20}$, we  observe that $\boldsymbol{t}_{\gamma }\left( \boldsymbol{%
\eta }_{0}\right) \neq \boldsymbol{0}_{3}$, and since $\boldsymbol{\Sigma }%
_{w,\beta }(\sigma_0)$ is positive definite matrix, we have $\boldsymbol{t}%
_{\gamma }^{T}\left( \boldsymbol{\eta }_{0}\right) \boldsymbol{\Sigma }%
_{w,\beta }(\sigma_0)\boldsymbol{t}_{\gamma }\left( \boldsymbol{\eta }%
_{0}\right) >0$. But for $\mu _{10}=\mu _{20}$,  $\boldsymbol{t}%
_{\gamma }\left( \boldsymbol{\eta }_{0}\right) =\boldsymbol{0}_{3}$, and
hence $\boldsymbol{t}_{\gamma }^{T}\left( \boldsymbol{\eta }_{0}\right)
\boldsymbol{\Sigma }_{w,\beta }(\sigma_0)\boldsymbol{t}_{\gamma }\left(
\boldsymbol{\eta }_{0}\right) =0$. Therefore, to get the asymptotic distribution 
of the test statistic under the null hypothesis we need a higher order scaling involving $n_1$ and $n_2$
to the quantity given in (\ref{EQ:initial_statistic}).

\begin{theorem}
\label{Th2}Let $w\in (0,1)$ as defined in (\ref{EQ:w}) and $\gamma >0$. Then, under the null hypothesis, we have
%
%
\begin{equation}
S_{\gamma }\left( \widehat{\mu }_{1\beta },\widehat{\mu }_{2\beta },\widehat{%
\sigma }_{\beta }\right) =\frac{2n_1n_2}{n_1+n_2}\frac{d_{\gamma
}(f_{\widehat{\mu }_{1\beta },\widehat{\sigma}_\beta },f_{\widehat{\mu }%
_{2\beta },\widehat{\sigma}_\beta })}{\lambda _{\beta ,\gamma
}\,\allowbreak (\widehat{\sigma}_\beta )}\underset{n_1,n_2\rightarrow
\infty }{\overset{\mathcal{L}}{\longrightarrow }}\chi ^{2}(1),
\label{EQ:test}
\end{equation}%
%
%
where
%
%
\begin{equation}
\lambda _{\beta ,\gamma }(\widehat{\sigma}_\beta )=\frac{\left( \beta
+1\right) ^{3}\left( 2\beta +1\right) ^{-\frac{3}{2}}}{\widehat{\sigma }%
_{\beta }^{\gamma }\left( 2\pi \right) ^{\frac{\gamma }{2}}\left( \gamma
+1\right) ^{\frac{1}{2}}}.
\label{lambda1}
\end{equation}
%
%
\end{theorem}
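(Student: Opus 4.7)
The first-order expansion that drove Theorem \ref{Th1} is degenerate under $H_0$, since $\mu_{10}=\mu_{20}$ forces $\boldsymbol{t}_\gamma(\boldsymbol{\eta}_0)=\boldsymbol{0}_3$, so the plan is to push the Taylor expansion of $d_\gamma(f_{\widehat{\mu}_{1\beta},\widehat{\sigma}_\beta},f_{\widehat{\mu}_{2\beta},\widehat{\sigma}_\beta})$ around $\boldsymbol{\eta}_0=(\mu_0,\mu_0,\sigma_0)^T$ one order further. Since $d_\gamma(f_{\mu_0,\sigma_0},f_{\mu_0,\sigma_0})=0$ and the gradient $\boldsymbol{t}_\gamma(\boldsymbol{\eta}_0)$ vanishes, a second-order expansion gives
\begin{equation*}
d_\gamma(f_{\widehat{\mu}_{1\beta},\widehat{\sigma}_\beta},f_{\widehat{\mu}_{2\beta},\widehat{\sigma}_\beta})
=\tfrac12(\widehat{\boldsymbol{\eta}}_\beta-\boldsymbol{\eta}_0)^T\boldsymbol{A}_\gamma(\boldsymbol{\eta}_0)(\widehat{\boldsymbol{\eta}}_\beta-\boldsymbol{\eta}_0)+o_p\!\left(\|\widehat{\boldsymbol{\eta}}_\beta-\boldsymbol{\eta}_0\|^2\right),
\end{equation*}
where $\boldsymbol{A}_\gamma(\boldsymbol{\eta}_0)$ is the Hessian of $d_\gamma$ at $\boldsymbol{\eta}_0$.

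The next step is to show that this Hessian has the very simple rank-one structure
\begin{equation*}
\boldsymbol{A}_\gamma(\boldsymbol{\eta}_0)=\frac{1}{\sqrt{1+\gamma}\,(2\pi)^{\gamma/2}\sigma_0^{\gamma+2}}\,\boldsymbol{e}\boldsymbol{e}^T,\qquad \boldsymbol{e}=(1,-1,0)^T.
\end{equation*}
The cleanest route is to differentiate the expressions for $t_{\gamma,1},t_{\gamma,2},t_{\gamma,3}$ in (\ref{t1})--(\ref{t3}) and evaluate at $u:=(\mu_1-\mu_2)/\sigma=0$: the factor $u$ or $u^2$ inside each $t_{\gamma,i}$ makes every $\sigma$-derivative vanish at the null point (equivalently, this follows from the identity $d_\gamma(f_{\mu,\sigma},f_{\mu,\sigma})\equiv 0$, which kills all derivatives along the diagonal), so only the $(\mu_1,\mu_2)$-block survives and its entries are $\pm 1/[\sqrt{1+\gamma}(2\pi)^{\gamma/2}\sigma_0^{\gamma+2}]$. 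The bookkeeping of this Hessian is the step most prone to arithmetic error and is where I would invest the most care.

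Combining the expansion with Theorem \ref{Th0}, the continuous mapping theorem yields
\begin{equation*}
\frac{2n_1n_2}{n_1+n_2}\,d_\gamma(f_{\widehat{\mu}_{1\beta},\widehat{\sigma}_\beta},f_{\widehat{\mu}_{2\beta},\widehat{\sigma}_\beta})
\stackrel{\mathcal L}{\longrightarrow}\boldsymbol{Z}^T\boldsymbol{A}_\gamma(\boldsymbol{\eta}_0)\boldsymbol{Z}
=\frac{(\boldsymbol{e}^T\boldsymbol{Z})^2}{\sqrt{1+\gamma}\,(2\pi)^{\gamma/2}\sigma_0^{\gamma+2}},
\end{equation*}
where $\boldsymbol{Z}\sim\mathcal N(\boldsymbol{0}_3,\boldsymbol{\Sigma}_{w,\beta}(\sigma_0))$. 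Because $\boldsymbol{\Sigma}_{w,\beta}(\sigma_0)$ is diagonal, $\boldsymbol{e}^T\boldsymbol{Z}\sim\mathcal N\!\left(0,\sigma_0^2(\beta+1)^3/(2\beta+1)^{3/2}\right)$, so the limit equals
\begin{equation*}
\frac{\sigma_0^2(\beta+1)^3(2\beta+1)^{-3/2}}{\sqrt{1+\gamma}\,(2\pi)^{\gamma/2}\sigma_0^{\gamma+2}}\,\chi^2(1)
=\lambda_{\beta,\gamma}(\sigma_0)\,\chi^2(1).
\end{equation*}

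To finish, I would invoke the consistency of $\widehat{\sigma}_\beta$ for $\sigma_0$ (which follows from Theorem \ref{Th0}) and the continuity of $\sigma\mapsto\lambda_{\beta,\gamma}(\sigma)$, so that $\lambda_{\beta,\gamma}(\widehat{\sigma}_\beta)\stackrel{P}{\to}\lambda_{\beta,\gamma}(\sigma_0)$; Slutsky's theorem then gives $S_\gamma(\widehat{\mu}_{1\beta},\widehat{\mu}_{2\beta},\widehat{\sigma}_\beta)\stackrel{\mathcal L}{\longrightarrow}\chi^2(1)$. The only genuine obstacle in this program is the explicit Hessian computation and the verification that cross terms involving $\sigma$ really do vanish; once the rank-one form $c_\gamma\boldsymbol{e}\boldsymbol{e}^T$ is established, the fact that the scaling constant produced by the quadratic form matches $\lambda_{\beta,\gamma}(\sigma_0)$ is a short algebraic check.
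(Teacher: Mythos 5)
Your proposal is correct and follows essentially the same route as the paper: a second-order Taylor expansion of $d_{\gamma}$ at the null point (where the value and gradient vanish), the identification of the Hessian as $\ell_{\gamma}(\sigma_0)(1,-1,0)^T(1,-1,0)$ with $\ell_{\gamma}(\sigma_0)=\sigma_0^{-(\gamma+2)}(2\pi)^{-\gamma/2}(\gamma+1)^{-1/2}$, Theorem \ref{Th0} for the limiting normal law, and Slutsky's theorem to replace $\sigma_0$ by $\widehat{\sigma}_\beta$. The only cosmetic difference is that you read off the $\chi^2(1)$ limit directly from the rank-one form $(\boldsymbol{e}^T\boldsymbol{Z})^2$, whereas the paper invokes the general eigenvalue representation of quadratic forms in normal vectors; your variance computation and the resulting constant $\lambda_{\beta,\gamma}(\sigma_0)$ both match the paper's.
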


\begin{proof}
See Appendix.
\hspace*{\fill}\bigskip
\end{proof}

\noindent
The above result indicates that the density power divergence test for the
hypothesis in (\ref{EQ:0}) can be based on the statistic $S_{\gamma }\left(
\widehat{\mu }_{1\beta },\widehat{\mu }_{2\beta },\widehat{\sigma }_{\beta
}\right) $, where the critical region corresponding to significance level $%
\alpha $ is given by the set of points satisfying
%
%
\begin{equation*}
S_{\gamma }\left( \widehat{\mu }_{1\beta },\widehat{\mu }_{2\beta },\widehat{%
\sigma }_{\beta }\right) >\chi _{\alpha }^{2}(1).
\end{equation*}%
%
%
%

Using the result of Theorem \ref{Th1} we can get an approximation of the power function of the test statistic. We consider $\mu _{10}\neq \mu _{20}$. 
In the following we will let $\lambda$ denote the quantity defined in equation (\ref{lambda1}) to keep the notation simple. The power function is then given by
%
%
\begin{eqnarray*}
 \eta_{\gamma,\beta}(\mu _{10}, \mu _{20}, \sigma_0) &=& P\left( S_{\gamma }\left( \widehat{\mu }_{1\beta },\widehat{\mu }_{2\beta },\widehat{%
\sigma }_{\beta }\right) >\chi _{\alpha }^{2}(1) \right) \\
&=& P\left( \frac{2}{\lambda}\frac{n_1n_2}{n_1+n_2} d_{\gamma }(f_{\widehat{\mu }%
_{1\beta },\widehat{\sigma}_\beta },f_{\widehat{\mu }_{2\beta },\widehat{%
\sigma }_{\beta }}) >\chi _{\alpha }^{2}(1) \right) \\
&=& P\Bigg( \sqrt{\frac{n_1n_2}{n_1+n_2}}\left( d_{\gamma }(f_{\widehat{\mu }%
_{1\beta },\widehat{\sigma}_\beta },f_{\widehat{\mu }_{2\beta },\widehat{%
\sigma }_{\beta }})-d_{\gamma }(f_{\mu _{10},\sigma_0},f_{\mu
_{20},\sigma_0})\right)   \\
&& \ \ \ > \frac{\lambda}{2} \sqrt{\frac{n_1 + n_2}{n_1n_2}} \left(\chi _{\alpha }^{2}(1) 
- \frac{2 n_1 n_2}{\lambda (n_1 + n_2)} d_{\gamma }(f_{\mu _{10},\sigma_0},f_{\mu
_{20},\sigma_0}) \right)\Bigg) \\
&= & 1 - \Phi_n\left( \frac{\lambda}{2\sigma_\gamma}\sqrt{\frac{n_1n_2}{n_1+n_2}} \left(\chi _{\alpha }^{2}(1) 
- \frac{2 n_1n_2}{n_1+n_2} d_{\gamma }(f_{\mu _{10},\sigma_0},f_{\mu
_{20},\sigma_0}) \right)\right),
\end{eqnarray*}
%
%
where $\Phi _{n}$ is a sequence of distributions functions tending uniformly
to the standard normal distribution function $\Phi$, and $\sigma_\gamma$ 
is defined in (\ref{sigma_gamma}). We observe that if $\mu _{10}\neq \mu _{20}$
%
%
\begin{equation}
 \lim_{n_1,n_2\rightarrow \infty }\eta_{\gamma,\beta}(\mu _{10}, \mu _{20}, \sigma_0) =1.
\end{equation}
%
%
Therefore, the test is consistent in the Frasar's sense \citep{MR0093863}. 

\begin{corollary}
\label{Cor1}Let $w\in (0,1)$ as defined in (\ref{EQ:w}) and $\gamma =\beta
=0$. Then, under the null hypothesis defined in (\ref{EQ:0}), we have
%
%
\begin{equation}
S_{0}\left( \widehat{\mu }_{10},\widehat{\mu }_{20},\widehat{\sigma }%
_{0}\right) =\frac{n_1n_2}{n_1+n_2}\frac{\left( \bar{X}-%
\bar{Y}\right) ^{2}}{\widehat{\sigma }_{0}^{2}}\underset{%
n_1,n_2\rightarrow \infty }{\overset{\mathcal{L}}{\longrightarrow }}\chi
^{2}(1).  \label{eqCor1}
\end{equation}
%
%
\end{corollary}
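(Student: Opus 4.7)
The plan is to observe that at $\gamma=\beta=0$ the statistic $S_0$ collapses to a scaled squared difference of sample means divided by the pooled MDPDE of the variance, and that its asymptotic $\chi^2(1)$ distribution under $H_0$ follows from classical central-limit reasoning plus Slutsky's theorem. I will \emph{not} try to invoke Theorem \ref{Th2} directly, since that theorem was stated for $\gamma>0$; rather I will first verify that the quantity displayed in the corollary really is the $\gamma=\beta=0$ specialization of the general statistic, and then give a self-contained asymptotic argument.

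First I would simplify the statistic. From the $\gamma=0$ branch of (\ref{EQ:initial_statistic}) and the fact that the MDPDE at $\beta=0$ coincides with the MLE, so $\widehat\mu_{10}=\bar X$, $\widehat\mu_{20}=\bar Y$, and $\widehat\sigma_0$ is given by (\ref{sig0}), one has $d_0(f_{\bar X,\widehat\sigma_0},f_{\bar Y,\widehat\sigma_0})=\tfrac{1}{2}(\bar X-\bar Y)^2/\widehat\sigma_0^2$. A direct substitution into (\ref{lambda1}) with $\gamma=\beta=0$ gives $\lambda_{0,0}(\widehat\sigma_0)=1$, so the general form $\frac{2n_1n_2}{n_1+n_2}\,d_\gamma/\lambda_{\beta,\gamma}$ evaluates to exactly the quantity on the left-hand side of (\ref{eqCor1}). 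This justifies the notation $S_0(\widehat\mu_{10},\widehat\mu_{20},\widehat\sigma_0)$.

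Next I would exploit the fact that this is a classical setup with an \emph{exact} finite-sample distribution for the numerator. Under the null hypothesis $\mu_{10}=\mu_{20}$, the samples are independent normals with common variance, so $\bar X-\bar Y\sim\mathcal N(0,\sigma_0^2(1/n_1+1/n_2))$ and hence
\begin{equation*}
\sqrt{\frac{n_1n_2}{n_1+n_2}}\,\frac{\bar X-\bar Y}{\sigma_0}\sim\mathcal N(0,1)
\end{equation*}
for every $n_1,n_2$. Simultaneously, from (\ref{sig0}) and the weak law of large numbers applied to $S_1^2$ and $S_2^2$, together with the weights $(n_i-1)/(n_1+n_2)\to w,1-w$, one gets $\widehat\sigma_0^2\xrightarrow{P}\sigma_0^2$, so $\sigma_0/\widehat\sigma_0\xrightarrow{P}1$.

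Finally I would finish by Slutsky's theorem: multiplying the exact $\mathcal N(0,1)$ variable by $\sigma_0/\widehat\sigma_0\xrightarrow{P}1$ preserves the limiting standard normal distribution, and squaring yields $\chi^2(1)$. There is essentially no hard step here; the only mild care needed is checking that the formula for $\lambda_{\beta,\gamma}$ is continuous at $(\beta,\gamma)=(0,0)$ and that the divergence in (\ref{EQ:initial_statistic}) at $\gamma=0$ is obtained as the limit of the $\gamma>0$ branch, so that the notation $S_0$ is consistent with the $\gamma>0$ definition used in Theorem \ref{Th2}.
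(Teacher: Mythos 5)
Your proof is correct, and it is precisely the ``straightforward'' direct argument the paper alludes to without writing out: the paper itself gives no proof of Corollary \ref{Cor1} beyond noting that it is immediate, and your verification that $\lambda_{0,0}=1$ reduces $S_0$ to the displayed form, followed by the exact normality of $\bar X-\bar Y$ under $H_0$, the consistency of $\widehat\sigma_0^2$, and Slutsky's theorem, is exactly the intended route. No gaps.
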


\noindent
The proof of the corollary is  straightforward. The test statistic given in the above corollary is closely related to the likelihood ratio test. This correspondence is described
in the next corollary.

\begin{corollary}
\label{Cor2}For a given sample the value of the test statistic $S_{0}\left(
\widehat{\mu }_{10},\widehat{\mu }_{20},\widehat{\sigma }_{0}\right) $, defined in (%
\ref{eqCor1}), does not exactly match the value of the likelihood ratio test statistic%
\begin{equation*}
-2\log \Lambda \left( \widehat{\mu }_{10},\widehat{\mu }_{20},\widehat{%
\sigma }_{0}\right) =(n_1+n_2)\log \left( 1+\frac{n_1n_2}{\left(
n_1+n_2\right) ^{2}}\frac{(\bar{X}-\bar{Y})^2}{%
\widehat{\sigma }_{0}^{2}}\right) ,
\end{equation*}%
where $\widehat{\sigma }_{0}^{2}$ is defined in (\ref{sig0}). However, as $%
n_1,n_2\rightarrow \infty $, and $w \in (0,1)$ as defined in (\ref{EQ:w}), both test statistics are asymptotically equivalent.
\end{corollary}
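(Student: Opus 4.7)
The first part of the statement, that the two test statistics do not coincide for a given sample, is essentially an observational fact: if we set
$$
T_n = \frac{n_1 n_2}{(n_1+n_2)^2}\,\frac{(\bar{X}-\bar{Y})^2}{\widehat{\sigma}_0^{2}},
$$
then $S_0 = (n_1+n_2)\,T_n$ while $-2\log\Lambda = (n_1+n_2)\log(1+T_n)$. Since $\log(1+x) < x$ strictly whenever $x>0$, the two quantities disagree on any sample with $\bar{X}\neq\bar{Y}$. One can exhibit this with any particular configuration of data, so the first claim requires no further argument.

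For the asymptotic equivalence under the null hypothesis, my plan is to apply a Taylor expansion of $\log(1+x)$ around $x=0$ and then invoke Corollary \ref{Cor1}. First I would note that by Corollary \ref{Cor1}, $S_0 = (n_1+n_2)T_n$ converges in distribution to $\chi^2(1)$, hence $S_0 = O_p(1)$ and consequently $T_n = S_0/(n_1+n_2) = o_p(1)$ as $n_1,n_2\to\infty$ with $w\in(0,1)$. I then write
$$
\log(1+T_n) = T_n - \tfrac{1}{2}T_n^{2} + R_n,
$$
where $|R_n| \le C\,|T_n|^3$ on the event $\{|T_n|\le 1/2\}$, which has probability tending to one.

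Multiplying by $n_1+n_2$ gives
$$
-2\log\Lambda = S_0 - \frac{S_0^{2}}{2(n_1+n_2)} + (n_1+n_2)R_n = S_0 - \frac{O_p(1)}{n_1+n_2} + O_p\!\left(\frac{1}{(n_1+n_2)^{2}}\right) = S_0 + o_p(1).
$$
Since $S_0 \overset{\mathcal{L}}{\longrightarrow} \chi^2(1)$ by Corollary \ref{Cor1}, Slutsky's theorem then yields $-2\log\Lambda \overset{\mathcal{L}}{\longrightarrow} \chi^2(1)$ and, more strongly, the difference $S_0 - (-2\log\Lambda)$ is $o_p(1)$, which is the precise sense of asymptotic equivalence the corollary asserts.

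I do not anticipate any real obstacle here; the only delicate bookkeeping is verifying that the Taylor remainder $(n_1+n_2)R_n$ is truly negligible, but this is immediate once $T_n$ is recognized as $O_p(1/(n_1+n_2))$, because $(n_1+n_2)|T_n|^3 = S_0^3/(n_1+n_2)^2 = O_p(1/(n_1+n_2)^2)$. The argument uses nothing beyond the already-established $\chi^2(1)$ limit of $S_0$ and a standard expansion of the logarithm.
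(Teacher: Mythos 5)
Your treatment of the asymptotic equivalence is correct and follows the same route as the paper: write $-2\log\Lambda=(n_1+n_2)\log(1+T_n)$ with $S_0=(n_1+n_2)T_n$ and expand the logarithm. In fact you supply the bookkeeping the paper omits --- the paper simply asserts that the remainder $R_{n_1,n_2}\rightarrow 0$ in probability, whereas you correctly derive $T_n=O_p\bigl(1/(n_1+n_2)\bigr)$ from the $\chi^2(1)$ limit of $S_0$ under the null and bound $(n_1+n_2)|T_n|^k$ for $k\geq 2$; your explicit restriction to the null hypothesis is also appropriate, since the $o_p(1)$ equivalence fails under the alternative.

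The one thing you pass over is the displayed identity $-2\log\Lambda=(n_1+n_2)\log\bigl(1+\tfrac{n_1n_2}{(n_1+n_2)^2}\tfrac{(\bar X-\bar Y)^2}{\widehat\sigma_0^2}\bigr)$ itself, which you treat as a given definition. In the paper this is a claim to be established, and its derivation occupies most of the proof: one computes the restricted and unrestricted maximized likelihoods to get $\Lambda$ as a ratio of residual sums of squares, and then uses the decomposition $\sum_i(X_i-\widetilde\mu)^2+\sum_i(Y_i-\widetilde\mu)^2=\sum_i(X_i-\bar X)^2+\sum_i(Y_i-\bar Y)^2+\tfrac{n_1n_2}{n_1+n_2}(\bar X-\bar Y)^2$ with $\widetilde\mu$ the pooled mean, together with the definition of $\widehat\sigma_0^2$ in (\ref{sig0}). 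Without this step your observation that $\log(1+x)<x$ for $x>0$ establishes only that the two displayed formulas differ, not that $S_0$ differs from the actual likelihood ratio statistic. This is a minor omission rather than an error, but it should be filled in for the proof to be self-contained.
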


\begin{proof}
Let us denote  $\Theta _{0}=\left\{ \left( \mu ,\mu ,\sigma \right)^T :\mu
\in \mathbb{R},\sigma \in \mathbb{R}^{+}\right\} ,$ $\Theta =\left\{ \left( \mu_1,\mu_2,\sigma \right)^T
:\mu_1,\mu_2\in \mathbb{R},\sigma \in \mathbb{R}^{+}\right\}$. The likelihood function is given by
%
%
\begin{equation*}
\mathcal{L}(\mu_1,\mu_2,\sigma
)=\prod_{i=1}^{n_1}\prod_{j=1}^{n_2}f_{\mu_1,\sigma }(X_i)f_{\mu
_{2},\sigma }(Y_{j}).
\end{equation*}%
%
%
It can be shown that
%
%
\begin{equation*}
\Lambda \left( \widehat{\mu }_{10},\widehat{\mu }_{20},\widehat{\sigma }%
_{0}\right) =\frac{\sup_{\mu_1,\mu_2,\sigma \in \Theta _{0}}\mathcal{L}%
(\mu_1,\mu_2,\sigma )}{\sup_{\mu_1,\mu_2,\sigma \in \Theta }%
\mathcal{L}(\mu_1,\mu_2,\sigma )}=\left( \frac{\sum_{i=1}^{n_1}%
\left( X_i-\widetilde{\mu }\right) ^{2}+\sum_{i=1}^{n_2}\left( Y_{i}-%
\widetilde{\mu }\right) ^{2}}{\sum_{i=1}^{n_1}\left( X_i-\bar{X}%
\right) ^{2}+\sum_{i=1}^{n_2}\left( Y_{i}-\bar{Y}\right) ^{2}}\right)
^{-\frac{n_1+n_2}{2}},
\end{equation*}%
%
%
where $\widetilde{\mu }=\frac{n_1}{n_1+n_2}\bar{X}+\frac{n_2}{%
n_1+n_2}\bar{Y}$. %
Therefore, asymptotically, the likelihood ratio test rejects the null hypothesis $H_{0}$
if
%
%
\begin{equation*}
-2\log \Lambda \left( \widehat{\mu }_{10},\widehat{\mu }_{20},\widehat{%
\sigma }_{0}\right) =(n_1+n_2)\log \left( \frac{\sum_{i=1}^{n_1}\left(
X_i-\widetilde{\mu }\right) ^{2}+\sum_{i=1}^{n_2}\left( Y_{i}-\widetilde{%
\mu }\right) ^{2}}{\sum_{i=1}^{n_1}\left( X_i-\bar{X}\right)
^{2}+\sum_{i=1}^{n_2}\left( Y_{i}-\bar{Y}\right) ^{2}}\right) >\chi
^{2}(1).
\end{equation*}
%
%
Now
%
%
\begin{align*}
\sum_{i=1}^{n_1}\left( X_i-\widetilde{\mu }\right)
^{2}+\sum\limits_{i=1}^{n_2}\left( Y_{i}-\widetilde{\mu }\right) ^{2}&
=\sum\limits_{i=1}^{n_1}\left( X_i-\bar{X}\right) ^{2}+n_1\left(
\bar{X}-\widetilde{\mu }\right) ^{2}+\sum\limits_{i=1}^{n_2}\left(
Y_{i}-\bar{Y}\right) ^{2}+n_2\left( \bar{Y}-\widetilde{\mu }%
\right) ^{2} \\
& =\sum\limits_{i=1}^{n_1}\left( X_i-\bar{X}\right)
^{2}+\sum\limits_{i=1}^{n_2}\left( Y_{i}-\bar{Y}\right) ^{2}+\frac{%
n_1n_2}{n_1+n_2}(\bar{X}-\bar{Y})^{2}.
\end{align*}
%
%
So
%
%
\begin{eqnarray*}
-2\log \Lambda \left( \widehat{\mu }_{10},\widehat{\mu }_{20},\widehat{%
\sigma }_{0}\right) & = & (n_1+n_2)\log \left( \frac{\sum_{i=1}^{n_1}\left( X_i-\widetilde{\mu }\right)
^{2}+\sum_{i=1}^{n_2}\left( Y_{i}-\widetilde{\mu }\right) ^{2}}{%
\sum_{i=1}^{n_1}\left( X_i-\bar{X}\right)
^{2}+\sum_{i=1}^{n_2}\left( Y_{i}-\bar{Y}\right) ^{2}}\right)
\\
& =& (n_1+n_2)\log \left( 1+\frac{n_1n_2}{\left( n_1+n_2\right)
^{2}}\frac{(\bar{X}-\bar{Y})^{2}}{\widehat{\sigma }_{0}^{2}}%
\right)
\\
& = &\displaystyle \frac{n_1n_2}{\left( n_1+n_2\right)
}\frac{(\bar{X}-\bar{Y})^{2}}{\widehat{\sigma }_{0}^{2}} + R_{n_1,n_2},
\end{eqnarray*}%
%
%
where $R_{n_1,n_2} \rightarrow 0$ in probability as  $n_1,n_2\rightarrow \infty$ and $w \in (0,1)$.
Thus, the test statistics  $-2\log \Lambda \left( \widehat{\mu }_{10},\widehat{\mu }%
_{20},\widehat{\sigma }_{0}\right)$ and $S_{0}\left( \widehat{\mu }_{10},%
\widehat{\mu }_{20},\widehat{\sigma }_{0}\right)$ are asymptotically equivalent.
\hspace*{\fill}%
\end{proof}

\section{Numerical Studies}\label{SEC:numerical}

\subsection{Simulation Study}
In this section we study the performance of our proposed test statistics
through simulated data. We have generated two random samples $%
X_{1},X_{2},\ldots ,X_{n_1}$ and $Y_{1},Y_{2},\ldots ,Y_{n_2}$ from $%
\mathcal{N}(\mu_1,\sigma^2)$ and $\mathcal{N}(\mu_2,\sigma^2)$
respectively; thus the total sample size is $n=n_1+n_2$. The value of $w$ in (%
\ref{EQ:w}) is taken to be 0.6, and the sample size from the first
population is $n_1=[wn]+1$, where $[x]$ denotes the integer part of $x$.
Our aim is to test the null hypothesis given in (\ref{EQ:0}). We have taken $%
\sigma^2=1$ in this study. We have compared the results of the ordinary
two sample $t$-test and the density power divergence tests with four
different values of the tuning parameter $\gamma =\beta =0,0.05, 0.1$ and 0.15;
let DPD($\gamma $) represent the DPD test with tuning parameter $%
\gamma $. The nominal level of the tests are 0.05, and all tests are
replicated 1,000 times.

In the first case we have taken $\mu_{1}=\mu_{2}=0$. Plot (a) in Figure %
\ref{fig:level_power} shows the observed levels of the five test statistics
for different values of the sample size (obtained as the proportion of test
statistics, in the $1000$ replications, that exceed the nominal $\chi ^{2}$
critical value at 5\% level of significance). It is seen that the observed
levels of the $t$-test are very close to the nominal level. On the other
hand, the DPD tests are slightly liberal for very small sample sizes and
lead to somewhat inflated observed levels. However, as the sample size
increases the levels settle down rapidly around the nominal level.

Next, we have generated data with $\mu_1 = 0$ but $\mu_2 = 1$. The observed power
of the tests are presented in plot (b) of Figure \ref{fig:level_power}.
There is not much difference among the observed powers in this plot. The DPD
tests have slightly higher power than the $t$-test in very small sample
sizes. This, however, must be a consequence of the fact that the observed
levels of these tests are higher than the nominal level (and higher than the
observed level of the $t$-statistic) in small samples.

Now we check the performance of the tests under contaminated data. So, we have
generated $n_2$ observations $Y_{1},Y_{2},\ldots ,Y_{n_2}$ from $0.95%
\mathcal{N}(\mu_2,1)+0.05\mathcal{N}(-10,1)$, whereas the $n_1$
observations representing the first population come from the pure $\mathcal{N%
}(\mu_1,1)$ distribution. To evaluate the stability of the level of the tests
for testing the hypothesis in (\ref{EQ:0}), we have taken $\mu_1=\mu
_{2}=0 $. Figure \ref{fig:level_power} (c) presents the levels for different
values of the sample sizes. It may be observed that there is a drastic
inflation in the levels for the $t$-test and DPD(0) test statistic, but the
levels of the other DPD test statistics remain stable.

Figure \ref{fig:level_power} (d) shows the power of the tests under the
contaminated setup considered in the previous paragraph, when $\mu_1 = 0$
and $\mu_2 = 1$. Here, the presence of the outliers lead to a sharp drop in
power for the $t$-test and the DPD(0) test. On the other hand, the other
tests are clearly more resistant, and hold their power much better as $\gamma$
increases.

On the whole, therefore, it appears that in comparison to the $t$-test, many
of our DPD tests are quite competitive in performance when the data come
from the pure model. Under contaminated data, however, the robustness
properties of the DPD tests appear to be far superior, and they do much
better at maintaining the stability of the level and the power in such cases.

%
%
\begin{figure}
\centering%
\begin{tabular}{rl}
\includegraphics[height=7.5cm, width=7.5cm]{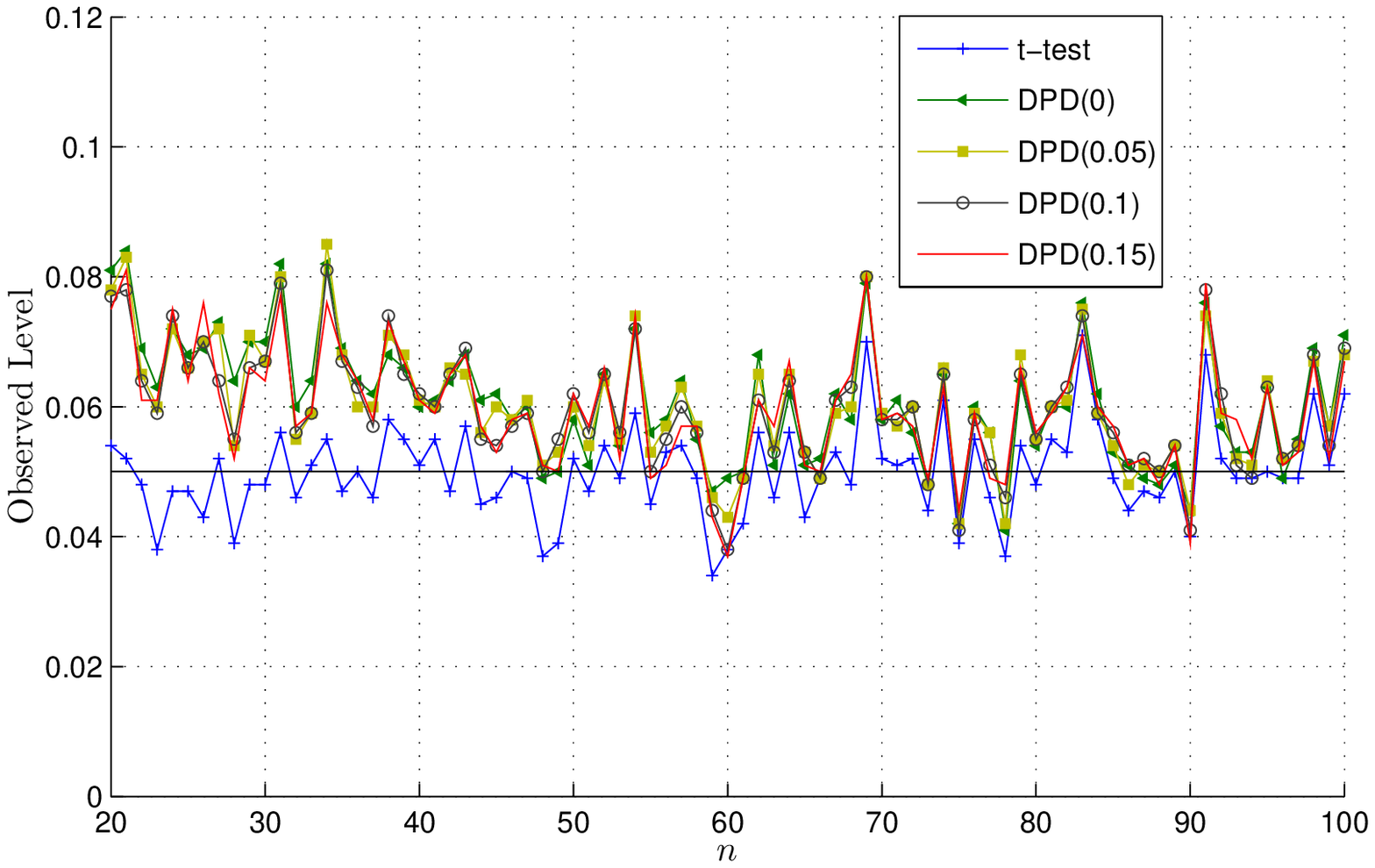}\negthinspace &
\negthinspace \includegraphics[height=7.5cm, width=7.5cm]{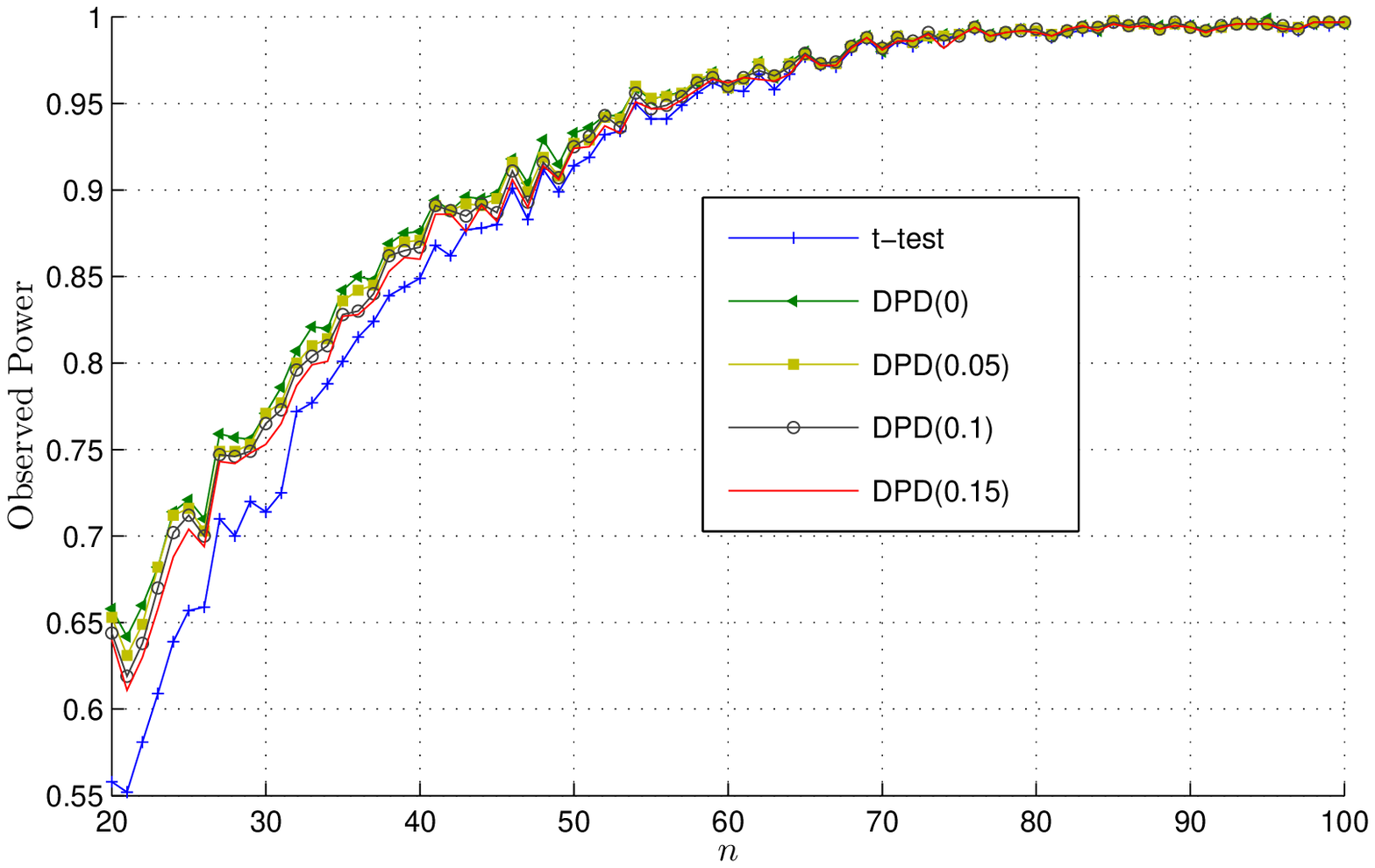} \\
\multicolumn{1}{c}{\textbf{(a)}} & \multicolumn{1}{c}{\textbf{(b)}} \\
\includegraphics[height=7.5cm, width=7.5cm]{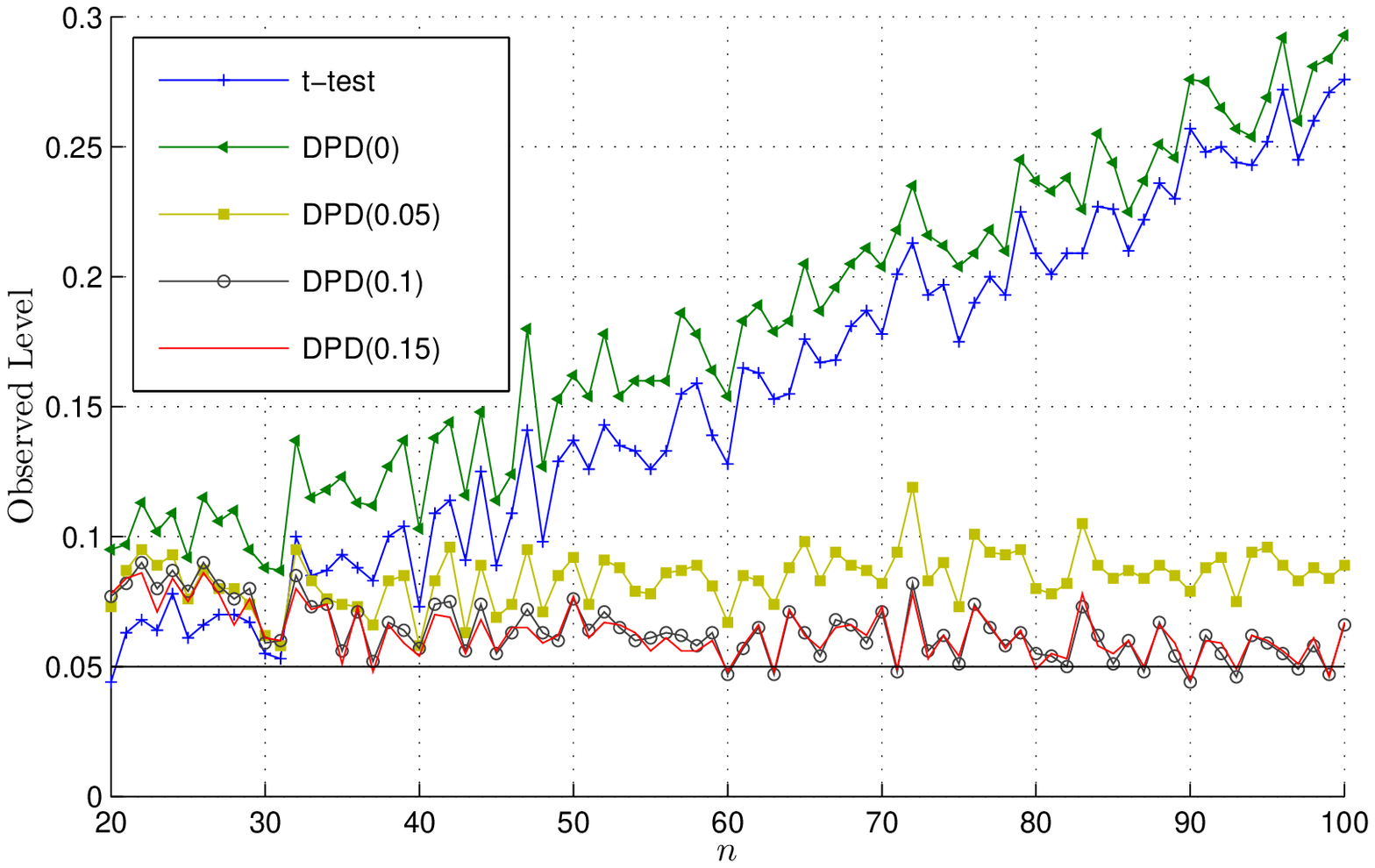}\negthinspace &
\negthinspace \includegraphics[height=7.5cm, width=7.5cm]{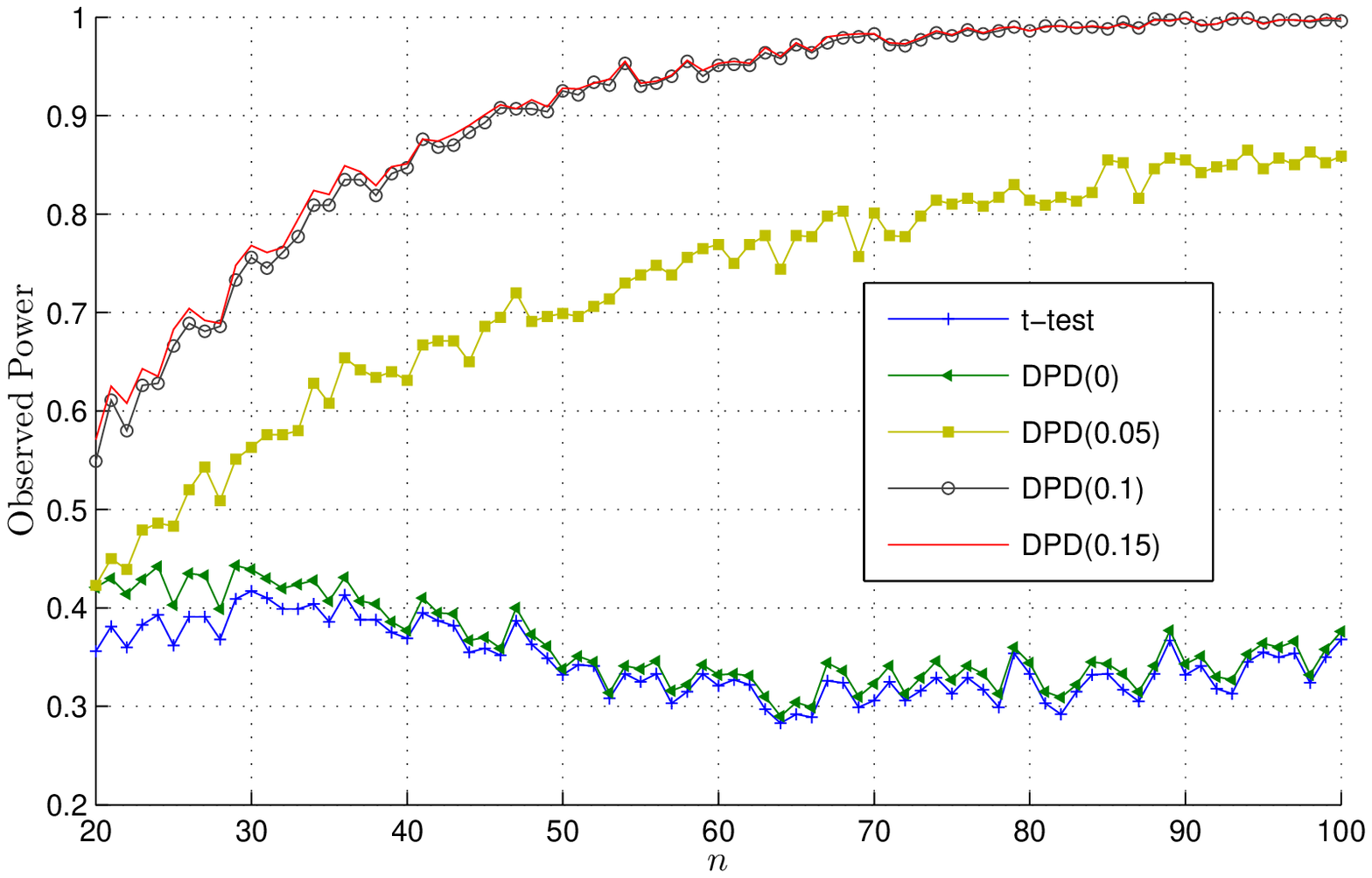} \\
\multicolumn{1}{c}{\textbf{(c)}} & \multicolumn{1}{c}{\textbf{(d)}}%
\end{tabular}%
\caption{(a) Simulated levels of the DPD tests for pure data; (b) simulated power of the DPD tests for pure data;
(c) simulated levels of the DPD tests for contaminated data; (d) simulated power of the DPD tests for contaminated data.}
\label{fig:level_power}
\end{figure}
%
%

\subsection{Comparison with Other Robust Tests}
In this section we compare the DPD test with some other popular robust tests. For comparison we have used a parametric test -- the two sample trimmed $t$-test proposed by \cite{yuen1973approximate}, as well as two non-parametric tests -- the Kolmogorov-Smirnov test (KS-test) and the Wilcoxon two-sample test (which is also known as the Mann-Whitney $U$-test). For the two sample trimmed $t$-test we have trimmed 20\% extreme observations from each of the data sets of $X$ and $Y$. The set up, the parameters taken for the simulation and the level of contamination are exactly the same as in the previous section. For comparison we have used only one DPD test in this case, that corresponding to tuning parameter 0.1. To emphasize the robustness properties of these tests we have also included the two sample $t$-test in this investigation. The results are presented in Figure \ref{fig:level_power_v1}. 

Figure \ref{fig:level_power_v1} (a) shows that the observed levels of all the robust tests are very close to the nominal level of 0.05 for the pure normal data. The same result is observed in Figure \ref{fig:level_power_v1} (c) for the contaminated data. On the other hand, if we consider the observed power of the tests the DPD test is much more powerful than the other tests. Specifically, for the contaminated data, the DPD test does significantly better than the others in holding on to its power. Therefore, on the whole, the DPD tests are not only superior to the two sample $t$-test under contamination, but they also appear to be competitive or better than the other popular robust tests as far as this simulation study is concerned.

\begin{figure}
\centering%
\begin{tabular}{rl}
\includegraphics[height=7.5cm, width=7.5cm]{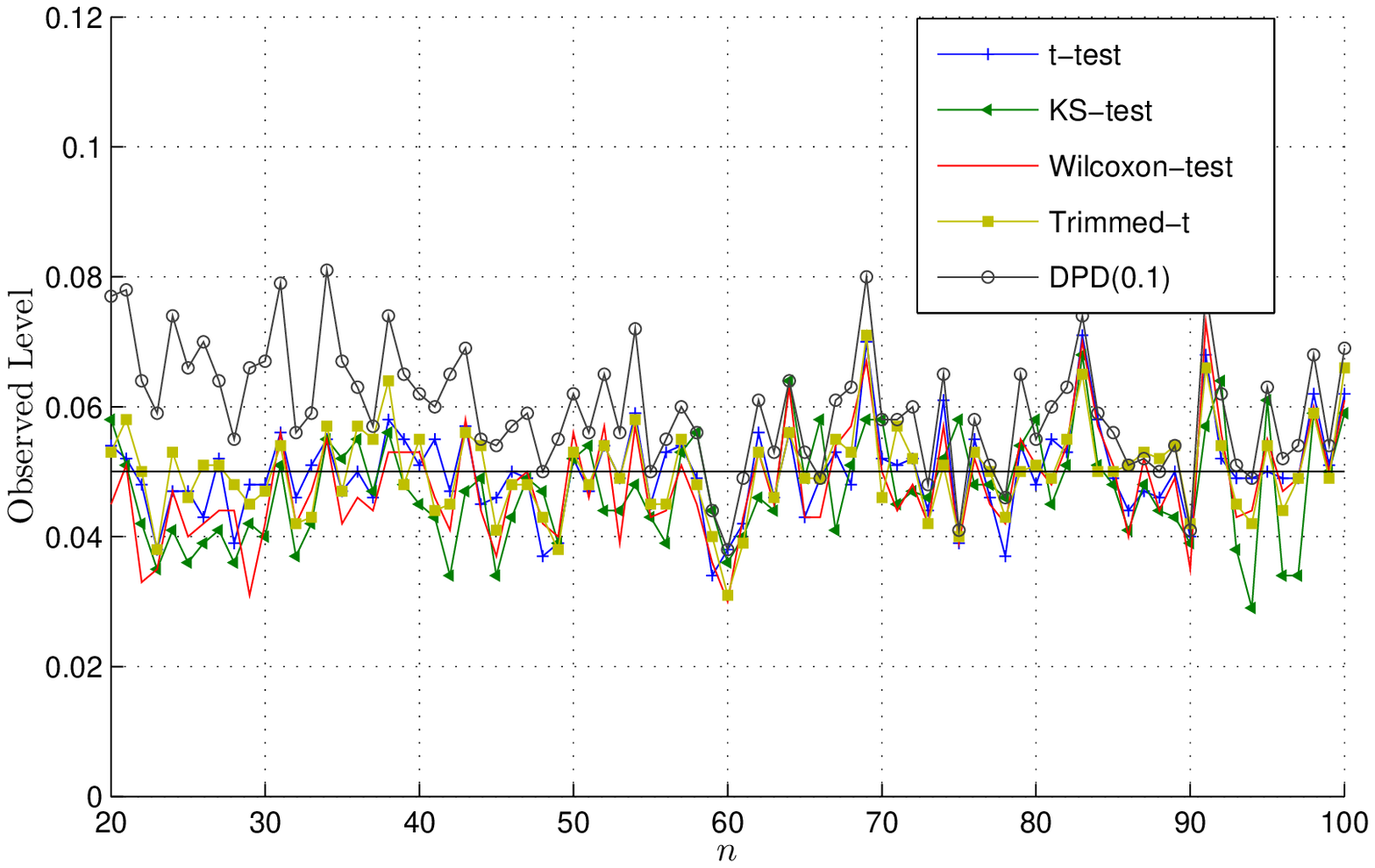}\negthinspace &
\negthinspace \includegraphics[height=7.5cm, width=7.5cm]{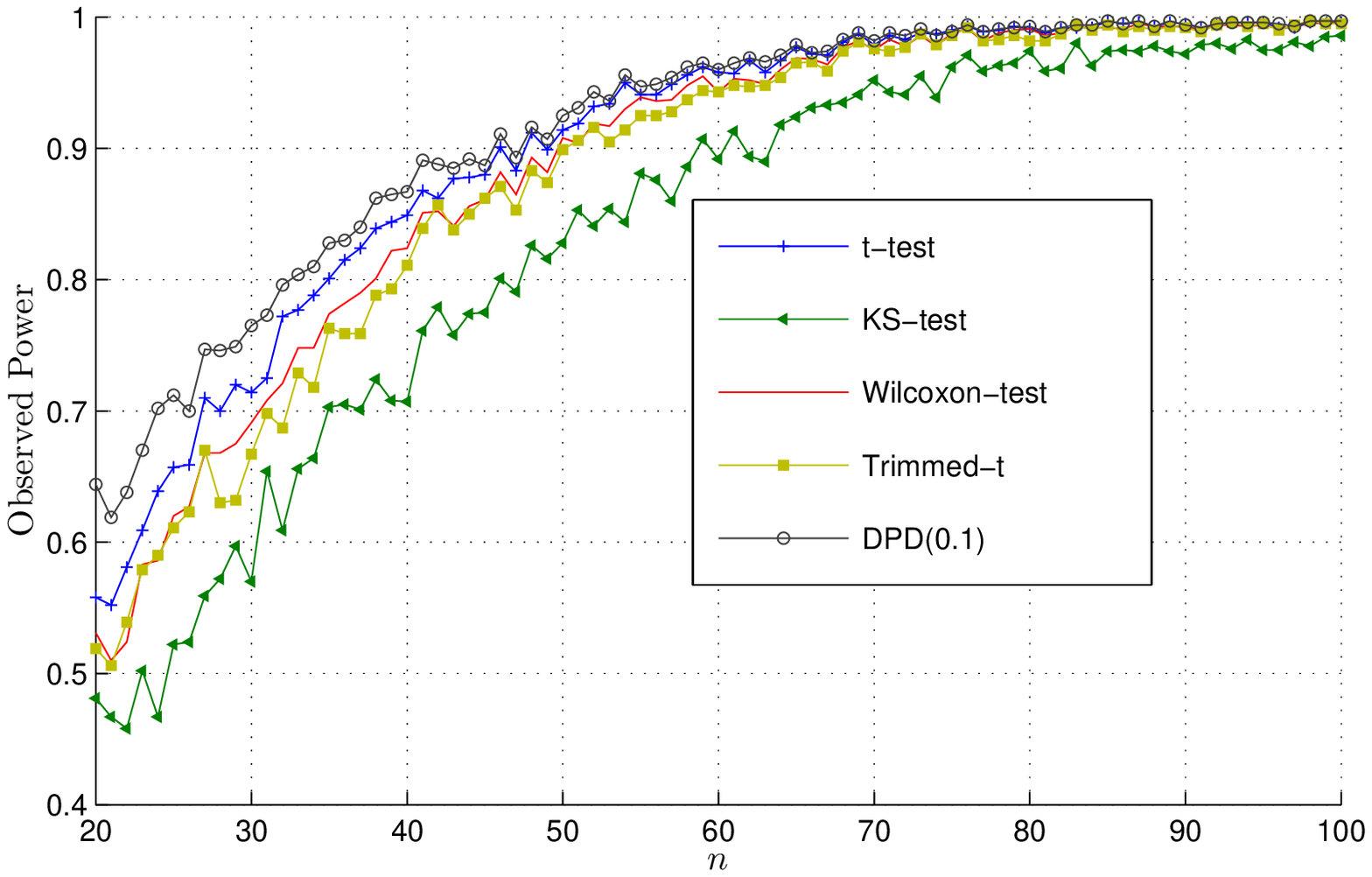} \\
\multicolumn{1}{c}{\textbf{(a)}} & \multicolumn{1}{c}{\textbf{(b)}} \\
\includegraphics[height=7.5cm, width=7.5cm]{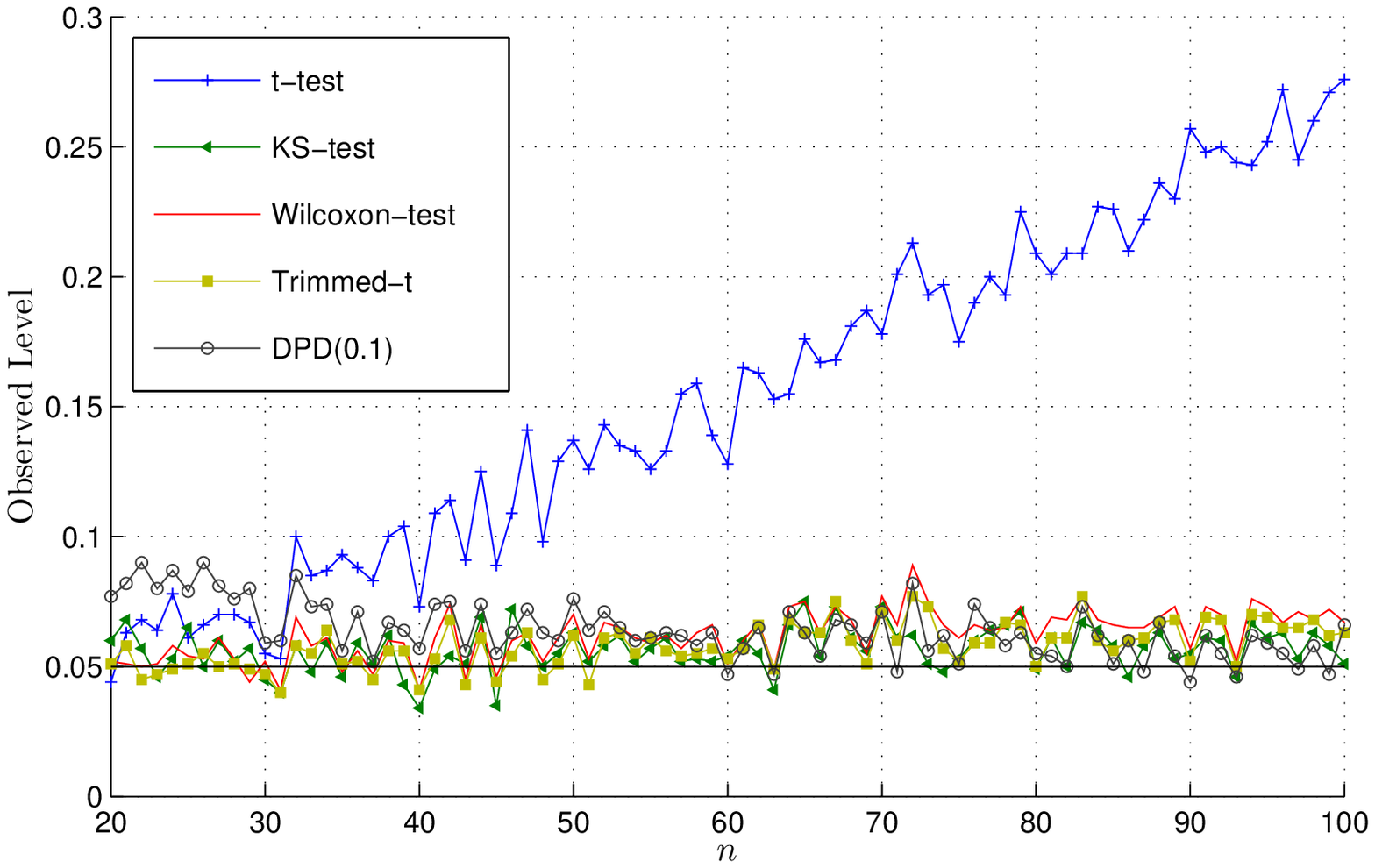}\negthinspace &
\negthinspace \includegraphics[height=7.5cm, width=7.5cm]{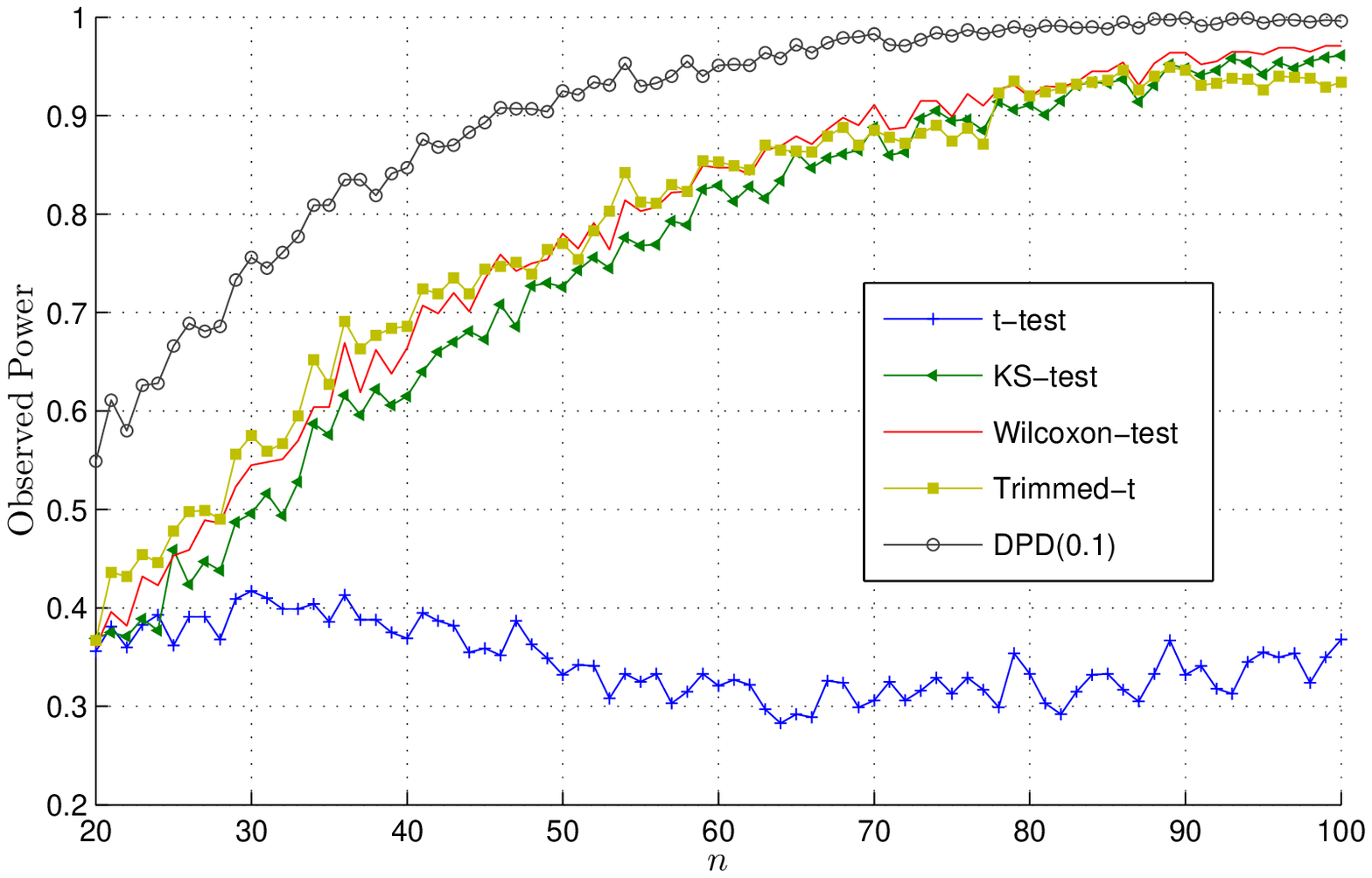} \\
\multicolumn{1}{c}{\textbf{(c)}} & \multicolumn{1}{c}{\textbf{(d)}}%
\end{tabular}
\caption{(a) Simulated levels of different tests for pure data; (b) simulated power of different tests for pure data;
(c) simulated levels of different tests for contaminated data; (d) simulated power of different tests for contaminated data.}
\label{fig:level_power_v1}
\end{figure}
\subsection{Real Data Examples}

\noindent \textbf{Example 2 (Lead Measurement data):} %
In Table \ref{TAB:Lake} the lead measurement data (\citealp{MR922042}, p. 280)
are presented. The numbers represent the values of $10(x-2)$, $x$
being the level of lead in the water samples from two lakes at randomly
chosen locations. To test whether the average pollution levels of the two
lakes are equal, we perform tests for equality of the means of the
populations represented by the two different samples. The $p$-values of the
DPD tests are plotted in Figure \ref{fig:Lake_data_p_val}; the solid line
represents the $p$-values for the full data, while the dashed line
represents for the $p$-values for the outlier deleted data. The less robust
tests (corresponding to very small values of $\gamma$) register only borderline 
significance under full data, and for very small values of $\gamma$ the tests 
would fail to reject the equality hypothesis at the 1\% significance level. 
However, for all value of $\gamma$, the tests would soundly reject the null hypothesis when the obvious
outliers (displayed with bold fonts in Table \ref{TAB:Lake}) are removed from
the dataset. For higher values of $\gamma$ (0.2 or larger), the $p$-values
with or without the outliers are practically identical, demonstrating that
the outliers have little effect in such cases. The $p$-values for the
two-sample $t$-test with and without the outliers are 0.02397 and 0.0004
respectively. As in Example 1, the presence of the outliers masks the
significance of the two-sample $t$-test and the small $\gamma$ DPD tests,
but the large $\gamma$ DPD tests successfully discount the effect of the
outliers.
%
%
\begin{table}
\caption{Lead Measurement data.}
\label{TAB:Lake}
\begin{center}
\begin{tabular}{lrrrrrrrrrr}
\hline
First Lake & $-1.48$ & $1.25$ & $-0.51$ & $0.46$ & $0.60$ & ${\bf -4.27}$ & $0.63$
& $-0.14$ & $-0.38$ & $1.28$ \\
& $0.93$ & $0.51$ & $1.11$ & $-0.17$ & $-0.79$ & $-1.02$ & $-0.91$ & $0.10$
& $0.41$ & $1.11$ \\ \hline
Second Lake & $1.32$ & $1.81$ & $-0.54$ & $2.68$ & $2.27$ & $2.70$ & $0.78$
& ${\bf -4.62}$ & $1.88$ & $0.86$ \\
& $2.86$ & $0.47$ & $-0.42$ & $0.16$ & $0.69$ & $0.78$ & $1.72$ & $1.57$ & $%
2.14$ & $1.62$ \\ \hline
\end{tabular}%
\end{center}
\end{table}
%
%
%
\begin{figure}
\centering{\includegraphics[height=6.5cm, width=14cm]{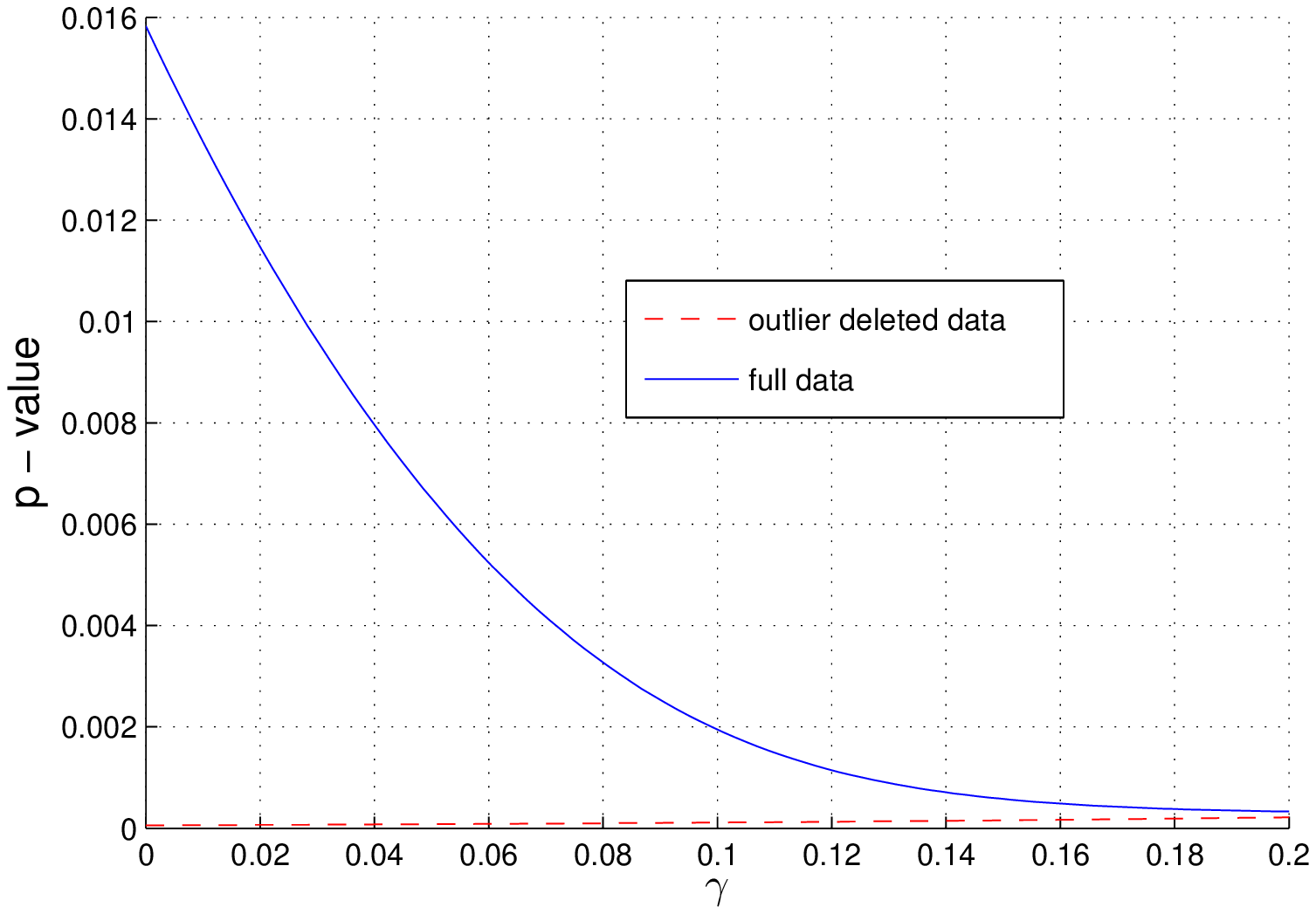}}
\caption{The $p$-values of the DPD tests for the Lead Measurement data for different values of $\gamma$. The solid line represents the full data analysis, while the dashed line represents the outlier deleted case.}
\label{fig:Lake_data_p_val}
\end{figure}
%

\bigskip \noindent \textbf{Example 3 (Ozone Control data):} \cite{MR0443210}
report data from a study design to assess the effects of ozone on weight
gain in rats. The experimental group consisted of 22 rats, each 70-day old kept in
an ozone environment for 7 days. A control group of 23 rats, of the same
age, were kept in an ozone-free environment. The weight gains, in grams, are
listed in Table \ref{TAB:Ozone}. We want to test for the equality of the
means of the two groups. The $p$-values of the DPD tests are plotted in
Figure \ref{fig:Ozone_control_data_p_val}. The $p$-values of the two-sample $%
t$-test for the full data and the outlier deleted data are $0.0168$ and $%
3.4721\times 10^{-6}$ respectively. The conclusions of this example are
similar to those of Examples 1 and 2. 

%
%
\begin{table}
\caption{Ozone Control data}
\label{TAB:Ozone}
\begin{center}
\begin{tabular}{lrrrrrrrrrrrr}
\hline
$X$ & $\bf{41.0}$ & $\bf{38.4}$ & $24.4$ & $25.9$ & $21.9$ & $18.3$ & $13.1$ & $27.3$
& $28.5$ & $\bf{-16.9}$ & $26.0$ & $17.4$ \\
& $21.8$ & $15.4$ & $27.4$ & $19.2$ & $22.4$ & $17.7$ & $26$ & $29.4$ & $%
21.4 $ & $26.6$ & $22.7$ &  \\ \hline
$Y$ & $10.1$ & $6.1$ & $20.4$ & $7.3$ & $14.3$ & $15.5$ & $-9.9$ & $6.8$ & $%
28.2$ & $17.9$ & $-9.0$ & $-12.9$ \\
& $14.0$ & $6.6$ & $12.1$ & $15.7$ & $\bf{39.9}$ & $-15.9$ & $\bf{54.6}$ & $-14.7$ & $%
\bf{44.1}$ & $-9.0$ &  &  \\ \hline
\end{tabular}%
\end{center}
\end{table}
%
%

%
%
\begin{figure}
\centering{%
\includegraphics[height=6.5cm,
width=14cm]{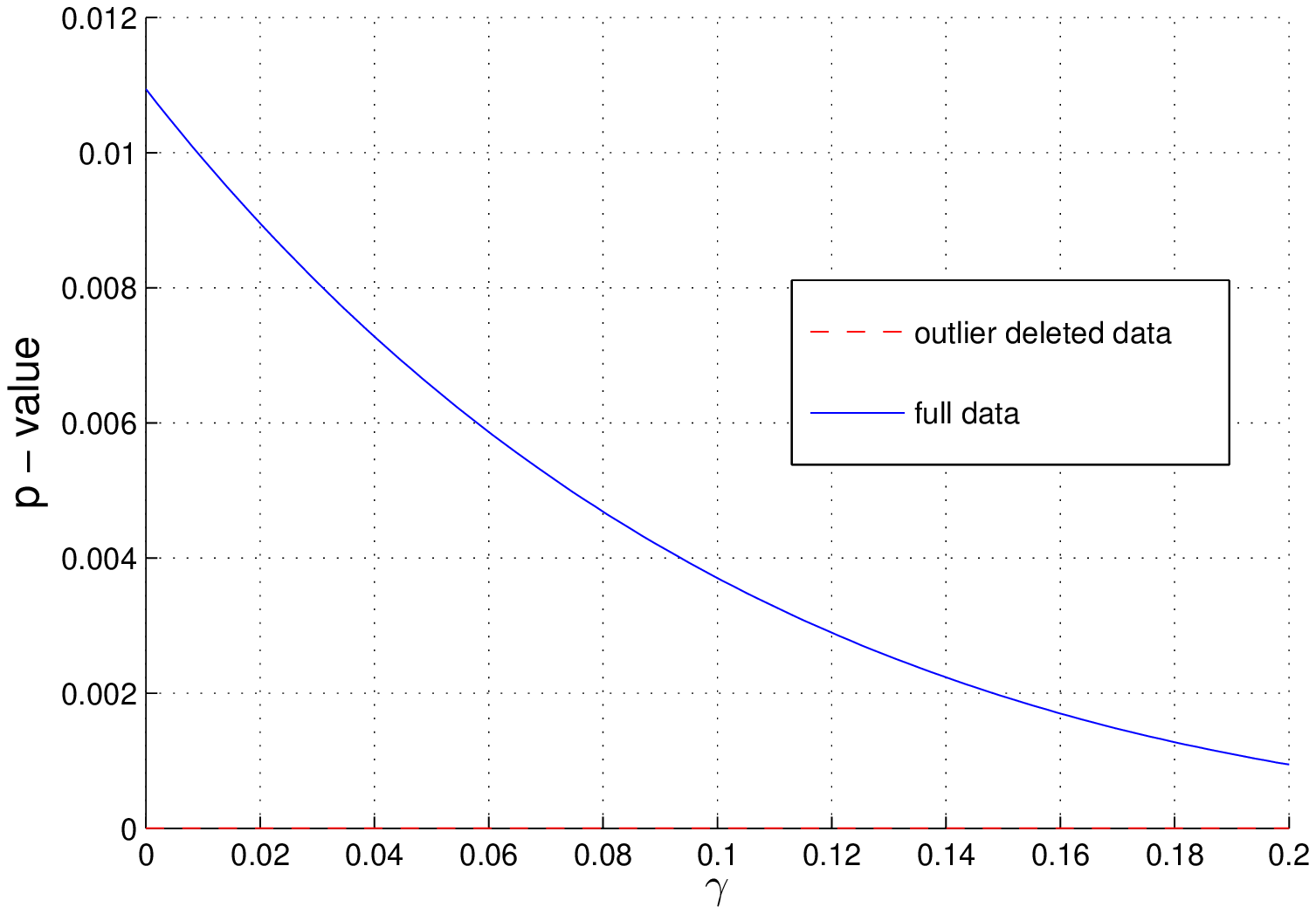}}
\caption{The $p$-values of the DPD tests for the Ozone Control data for different values of $\gamma$. The solid line represents the full data analysis, while the dashed line represents the outlier deleted case.}
\label{fig:Ozone_control_data_p_val}
\end{figure}
%
%

\bigskip \noindent \textbf{Example 4 (Newcomb's Light Speed data)}: In 1882 Simon
Newcomb, an astronomer and mathematician, measured the time required for a
light signal to pass from his laboratory on the Potomac River to a mirror at
the base of the Washington Monument and back. The total distance was $%
7443.73 $ meters. Table \ref{TAB:tNewcomb} contains these measurements from
three samples, as deviations from $24,800$ nanoseconds. For example, for the
first observation, $28$, means that the time taken for the light to travel
the required $7443.73$ meters is $24,828$ nanoseconds. The data comprises
three samples, of sizes $20$, $20$ and $26$, respectively, corresponding to
three different days. These data have been analyzed previously by a number
of authors including \cite{MR0455205} and \cite{Voinov}. The $p$-values
of the DPD statistics for the test of the equality of means between Day 1
and Day 2, and Day 1 and Day 3 are plotted in Figure \ref{fig:Newcomb_data12_p_val}, and \ref{fig:Newcomb_data13_p_val} respectively.

The $p$-values for the two-sample $t$-tests for the (Day 1, Day 2)
comparison are $0.1058$ for the full data case, and $0.3091$ for the outlier
deleted case. The same for the (Day 1, Day 3) comparison are $0.0970$ and $%
0.2895$ respectively. However, for the large $\gamma $, the results from the  DPD tests
are clearly insignificant with or without the outliers. In this example,
therefore, the outliers are forcing the outcome of the two-sample $t$-test
(and the DPD tests for small $\gamma $) to the borderline of significance,
but the robust tests give insignificant results with or without the
outliers, preventing the false significance that is produced by the outliers
in the $t$-test; this is unlike the previous three examples where the robust
tests overcame a masking effect. These examples demonstrate that the robust
DPD tests can give protection against spurious conclusions in both
directions.

\begin{table}[tbp]
\caption{Newcomb's Light Speed data.}
\label{TAB:tNewcomb}
\begin{center}
\begin{tabular}{lrrrrrrrrrrrrrrr}
\hline
day 1 & $28$ & $26$ & $33$ & $24$ & $34$ & $\bf{-44}$ & $27$ & $16$ & $40$ & $\bf{-2}$
& $29$ & $22$ & $24$ & $21$ & $25$ \\
& $30$ & $23$ & $29$ & $31$ & $19$ &  &  &  &  &  &  &  &  &  &  \\ \hline
day 2 & $24$ & $20$ & $36$ & $32$ & $36$ & $28$ & $25$ & $21$ & $28$ & $29$
& $37$ & $25$ & $28$ & $26$ & $30$ \\
& $32$ & $36$ & $26$ & $30$ & $22$ &  &  &  &  &  &  &  &  &  &  \\ \hline
day 3 & $36$ & $23$ & $27$ & $27$ & $28$ & $27$ & $31$ & $27$ & $26$ & $33$
& $26$ & $32$ & $32$ & $24$ & $39$ \\
& $28$ & $24$ & $25$ & $32$ & $25$ & $29$ & $27$ & $28$ & $29$ & $16$ & $23$
&  &  &  &  \\ \hline
\end{tabular}%
\end{center}
\end{table}
%
%

%
%

%
%
\begin{figure}
\centering{%
\includegraphics[height=6.5cm,
width=14cm]{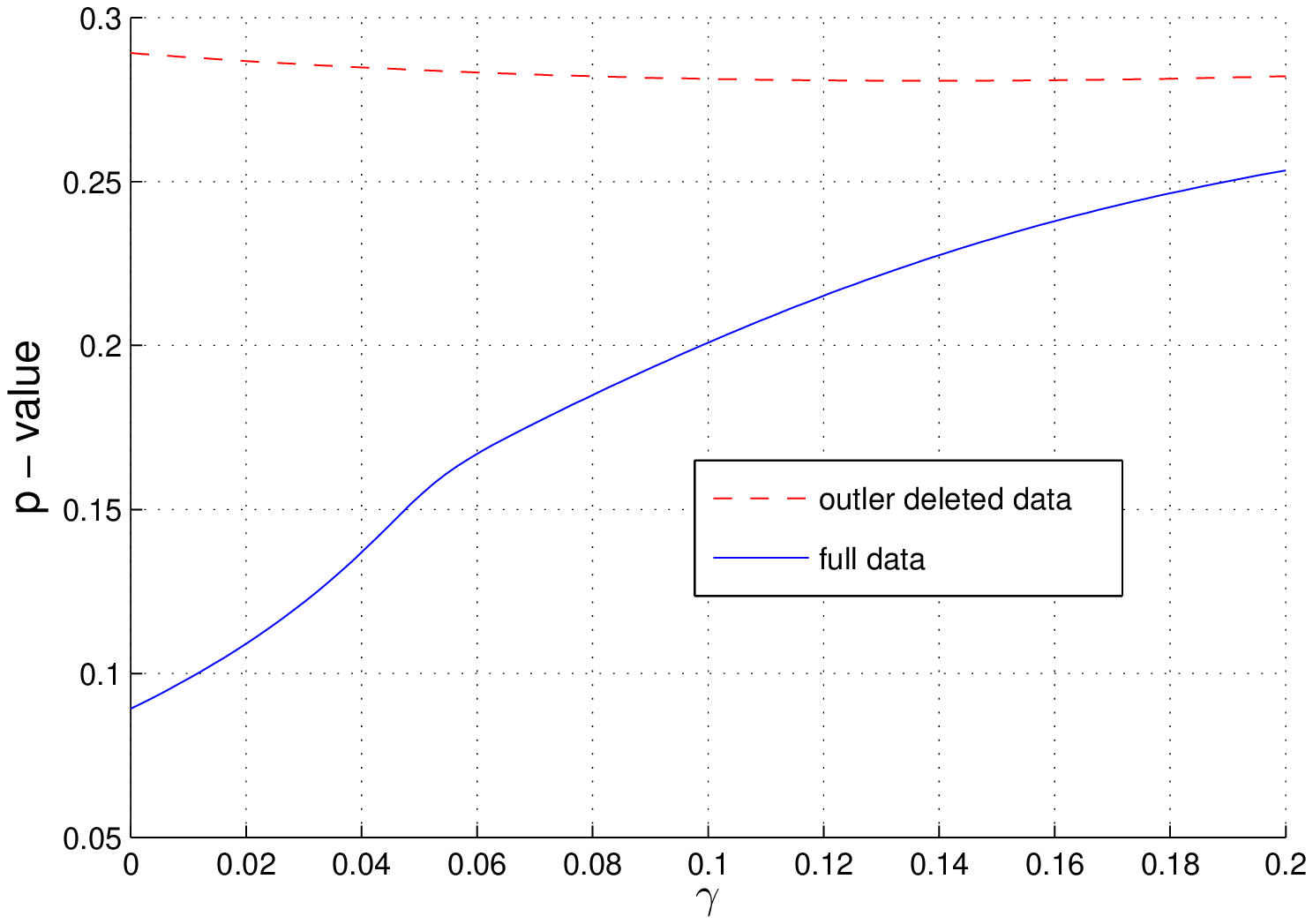}}
\caption{The $p$-values of the DPD tests for Newcomb's Light Speed data (Day 1 versus Day 2) for different values of $\gamma$. The solid line represents the full data analysis, while the dashed line represents the outlier deleted case.}
\label{fig:Newcomb_data12_p_val}
\end{figure}
\begin{figure}
\centering{%
\includegraphics[height=6.5cm,
width=14cm]{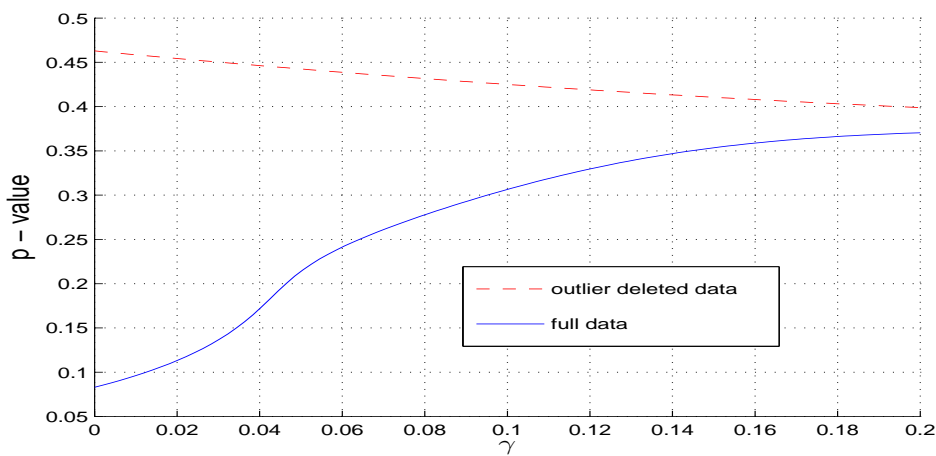}}
\caption{The $p$-values of the DPD tests for Newcomb's Light Speed data (Day 1 versus Day 3) for different values of $\gamma$. The solid line represents the full data analysis, while the dashed line represents the outlier deleted case.}
\label{fig:Newcomb_data13_p_val}
\end{figure}

\bigskip \noindent \textbf{Example 5 (Na Intake data)}: Sodium chloride preference was
determined in ten patients with essential hypertension and in 12 normal
volunteers. All exhibited normal detection and recognition thresholds for
the taste of sodium chloride. All were placed on a constant dry diet
containing 9 mEq of Na+ and given, as their only source of fluids, a choice
of drinking either distilled water or 0.15 M sodium chloride. Patients with
essential hypertension consumed a markedly greater proportion of their total
fluid intake as saline (38.2\% vs 10.6\%, average daily preference over one
week) and also showed a greater total fluid intake (1,269 ml vs 668 ml,
average daily intake over one week). The hypertensive patients consumed more
than four times as much salt as did the normal volunteers. The data are
given in Table \ref{TAB:Na_intake}. The $p$-values of the tests for the equality 
of means are plotted in Figure \ref{fig:Na_intake_data_p_val}. The findings are 
similar to examples 1, 2 and 3. 

\begin{table}
\caption{Na Intake data.}
\label{TAB:Na_intake}
\begin{center}
\begin{tabular}{lrrrrrrrrrrr}
\hline
$X$ & $114.6$ & $64.6$ & $70.4$ & $61.2$ & $\bf{297}$ & $60.9$ & $73.7$ & $15.7$
& $53.3$ &  &  \\
$Y$ & $14.2$ & $3.2$ & $3.7$ & $0.0$ & $73.6$ & $56.6$ & $97.2$ & $2.4$ & $%
0.0$ & $4.8$ & $0$ \\ \hline
\end{tabular}%
\end{center}
\end{table}
\begin{figure}
\centering{%
\includegraphics[height=6.5cm,
width=14cm]{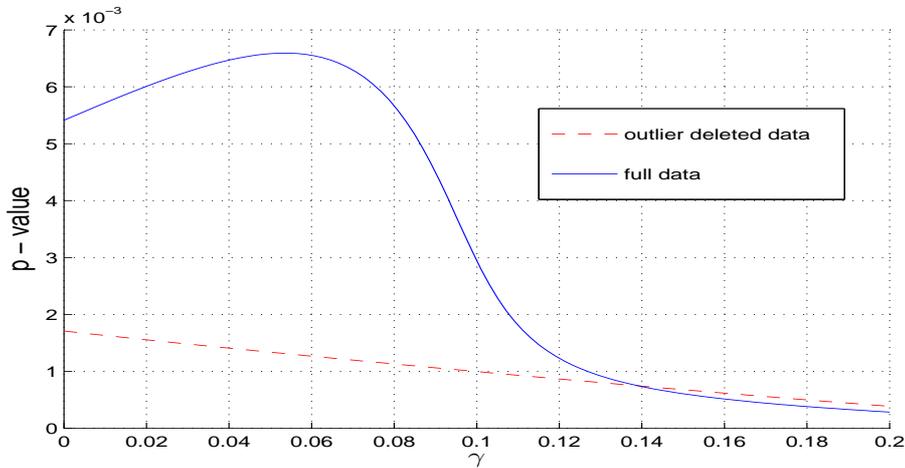}}
\caption{The $p$-values of the DPD tests for Na Intake for different values of $\gamma$. The solid line represents the full data analysis, while the dashed line represents the outlier deleted case.}
\label{fig:Na_intake_data_p_val}
\end{figure}
\bigskip

\noindent {\bf Example 6 (Sri Lanka Zinc Content data)}:
The impact of a polluted environment on the health of the residents of an 
area is a common environmental concern. Large amounts of heavy metals in 
the body may signal a serious health threat to a community. One study, 
performed in Sri Lanka, sought to compare rural Sri Lankans with their 
urban counterparts in terms of the zinc content of their hair. A collection of individuals from 
rural Sri Lanka was recruited, samples of their hair were taken, and the 
zinc content in the hair was measured. An independent collection of students 
from an urban environment was studied, with the zinc content in samples of their 
hair being measured as well. The data are given in Table \ref{TAB:SriLanka}. 
The $p$-values of the tests for the equality of the means are plotted in Figure 
\ref{fig:SriLanka}. The results again
indicate that the presence of outliers can mask the true significance in
case of the two sample $t$-test and DPD tests for small values of $\gamma$, but
for the large $\gamma$ DPD tests are much more stable in such situations.

\begin{table}
\caption{Sri Lanka Zinc Content data.}
\label{TAB:SriLanka}
\begin{center}
 \begin{tabular}{lccccccccccc}\hline
 Urban ($X$) & 1120 & 230 &  \bf{4200} &  1200 &  1400 &  750 &  2101 &  430 &  690 &  600 &  834 \\
Rural ($Y$) &  \bf{3619} &  1104 &  243 &  658 &  673 &  598 &  648 &  918 &  133 &  289 &  250 \\
            & 304 &     555 &  640 &  933 &      &      &      &      &      &      &        \\
\hline
\end{tabular}%
\end{center}
\end{table}
\begin{figure}
\centering
{\includegraphics[height=6.5cm,width=14cm]{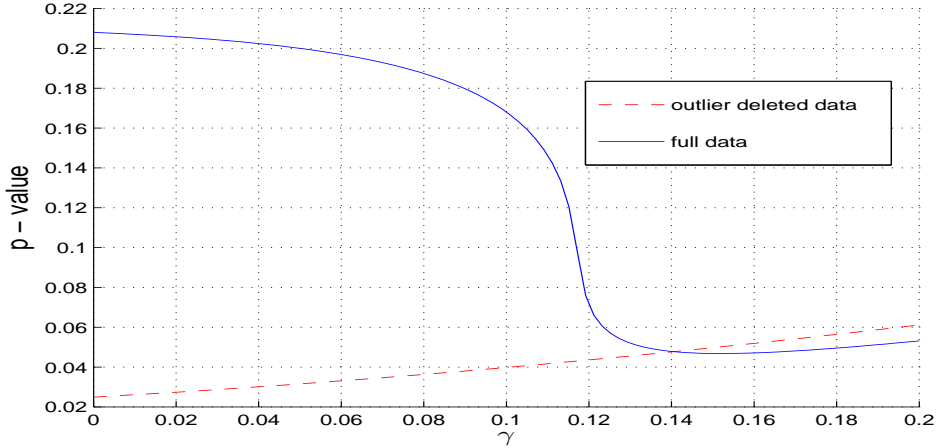}}
\caption{The $p$-values of the DPD tests for Sri Lanka Zinc Content data for different values of $\gamma$. The solid line represents the full data analysis, while the dashed line represents the outlier deleted case.}
\label{fig:SriLanka}
\end{figure}
%
%

\section{Concluding Remarks}\label{SEC:concluding}

Without any doubt, the two sample $t$-test is one of the most frequently used tools in the
statistics literature. It allows the experimenter to perform tests of the comparative hypotheses, 
which are the default requirements to be passed before one may declare that a new drug or treatment is 
an improvement over an existing one. The two sample $t$-test is simple to implement and has 
several optimality properties. In spite of such desirable attributes, this test is deficient on 
one count, which is that it does not retain its desired properties under contamination and model misspecification. As few 
as one, single, large outlier can turn around the decision of the test, and can make the 
resulting inference meaningless. In this paper we have introduced a test based on the density 
power divergence; the theoretical properties of the test have been rigorously determined. 
More importantly, we have demonstrated, through several real
data examples, that the DPD test is capable of uncovering 
both kinds of masking effects caused by outliers -- blurring the
true difference when one exists, and detecting a difference when
there is actually none.
 The test is simple to use and easy to understand, and we 
trust that it has the potential to become a powerful tool for the applied statistician.

\bigskip

\noindent
\textbf{Acknowledgments }This work was partially supported by Grants MTM-2012-33740 and ECO-2011-25706. The authors gratefully acknowledge the suggestions of  two anonymous referees which led to an improved version of the paper.

\bibliographystyle{abbrvnat}
\bibliography{reference}


\bigskip

\section*{Appendix}

\noindent \textbf{Proof of Theorem \ref{Th0}:} 
As $\widehat{\mu }_{i\beta }$ is the solution of the estimating equation $_{1}h_{n_{i},\beta }^{\prime }\left( \mu _{i},\sigma \right) =0$,
we get from equation (\ref{muSigma})
\begin{equation*}
\sqrt{n_{i}}(\widehat{\mu }_{i\beta }-\mu _{i0})=\sqrt{n_{i}} \boldsymbol{J}_{11,\beta }^{-1}( \sigma
_{0})\,\allowbreak
_{1}h_{n_{i},\beta }^{\prime }\left( \mu _{i0},\sigma_0\right) +o_{p}(1) ,\quad i=1,2.
\end{equation*}
%
%
Hence, using (\ref{1.1}) we get
%
%
\begin{equation}
\sqrt{n_{i}}(\widehat{\mu }_{i\beta }-\mu _{i0})\underset{n_{i}\rightarrow
\infty }{\overset{\mathcal{L}}{\longrightarrow }}\mathcal{N}\left(
0,K_{11,\beta }\boldsymbol{(}\sigma_0)J_{11,\beta }^{-2}( \sigma_0) \right) ,\quad i=1,2,
\label{distmu}
\end{equation}
%
%
where
%
%
\begin{equation}
K_{11,\beta }\boldsymbol{(}\sigma_0)J_{11,\beta }^{-2}(\sigma
_{0})=\sigma_0^{2}\left( \beta +1\right) ^{3}\left( 2\beta +1\right) ^{-%
\frac{3}{2}}.  \label{KJ1}
\end{equation}
%
%
It is clear that $\widehat{\mu }_{1\beta }$ and $\widehat{\mu }_{2\beta }$ are based on two 
independent set of observations, hence, $Cov(\widehat{\mu }_{1\beta },\widehat{\mu }_{2\beta })=0$.
As $_{2}h_{n_1,n_2,\beta }^{\prime }(\widehat{\boldsymbol{\eta }}_\beta )=0$, taking a Taylor series
expansion around $\boldsymbol{\eta }_0$ we get
\begin{align}
_{2}h_{n_1,n_2,\beta }^{\prime }(\widehat{\boldsymbol{\eta }}_\beta ) =& \ 
_{2}h_{n_1,n_2,\beta }^{\prime }( \boldsymbol{\eta }_0) + \left. \frac{\partial }{\partial \mu_1}\,\allowbreak
_{2}h_{n_1,n_2,\beta }^{\prime }( \boldsymbol{\eta })\right\vert_{\boldsymbol{\eta }=\boldsymbol{\eta }_0} (\widehat{\mu }_{1\beta }-\mu _{10}) \nonumber\\
& + \left. \frac{\partial }{\partial \mu_2}\,\allowbreak _{2} h_{n_1,n_2,\beta
}^{\prime \prime }( \boldsymbol{\eta }) \right\vert_{\boldsymbol{\eta }=\boldsymbol{\eta }_0} (\widehat{%
\mu }_{2\beta }-\mu _{20}) \nonumber\\
& + \left. \frac{\partial }{\partial \sigma }\,\allowbreak _{2}h_{n_1,n_2,\beta
}^{\prime \prime }( \boldsymbol{\eta })\right\vert_{\boldsymbol{\eta }=\boldsymbol{\eta }_0} \left(
\widehat{\sigma}_\beta -\sigma_0\right) +o_{p}\left(
(n_1+n_2)^{-1/2}\right) \nonumber\\
=& \ 0.
\label{2h'}
\end{align}%
%
%
Notice that
\begin{align}
\lim_{n_1,n_2\rightarrow \infty } \left. \frac{\partial }{\partial \mu_1}\,\allowbreak _{2}h_{n_1,n_2,\beta
}^{\prime }( \boldsymbol{\eta }) \right\vert_{\boldsymbol{\eta }=\boldsymbol{\eta }_0} & = \lim_{n_1,n_2\rightarrow \infty } \frac{\partial }{%
\partial \mu_1}\left( \frac{n_1}{n_1+n_2}\,\allowbreak
_{2}h_{n_1,\beta }^{\prime }\left( \mu _{10},\sigma_0\right) +\frac{%
n_2}{n_1+n_2}\,\allowbreak _{2}h_{n_2}^{\prime }(\mu _{10},\sigma
_{0})\right)  \nonumber\\
& =\lim_{n_1,n_2\rightarrow \infty } \frac{n_1}{n_1+n_2} \lim_{n_1,n_2\rightarrow \infty } \left. \frac{\partial }{\partial \mu_1}\,\allowbreak
_{2}h_{n_1,\beta }^{\prime }\left( \mu _{1},\sigma_0\right) \right\vert_{\mu_1 = \mu_{10} }\nonumber\\
&= w \boldsymbol{J}_{12,\beta }\left( \sigma_0\right) = 0.
\label{hmu2}
\end{align}
%
%
Similarly we get
%
%
\begin{equation}
\lim_{n_1,n_2\rightarrow \infty } \left. \frac{\partial }{\partial \mu_2}\,\allowbreak _{2}h_{n_1,n_2,\beta
}^{\prime }( \boldsymbol{\eta })\right\vert_{\boldsymbol{\eta }=\boldsymbol{\eta }_0} =0.
\label{hmu1}
\end{equation}
%
%
Moreover, 
\begin{eqnarray}
\lim_{n_1,n_2\rightarrow \infty } \left. \frac{\partial }{\partial \sigma }%
\allowbreak _{2}h_{n_1,n_2,\beta }^{\prime }\left( \boldsymbol{\eta } \right)\right\vert_{\boldsymbol{\eta }=\boldsymbol{\eta }_0}  &=&\lim_{n_1,n_2\rightarrow \infty }\tfrac{%
n_1}{n_1+n_2}\,\allowbreak _{22}h_{n_1,\beta }^{\prime \prime }(\mu
_{10},\sigma_0)+\lim_{n_1,n_2\rightarrow \infty }\tfrac{n_2}{%
n_1+n_2}\,\allowbreak _{22}h_{n_2,\beta }^{\prime \prime }\left( \mu
_{20},\sigma_0\right)  \nonumber\\
&=&w\boldsymbol{J}_{22,\beta }\left( \sigma_0\right) +(1-w)\boldsymbol{J}_{22,\beta }( \sigma_0) = \boldsymbol{J}_{22,\beta } ( \sigma_0).
\label{hsigma}
\end{eqnarray}
%
Therefore, using equations (\ref{hmu2}), (\ref{hmu1}) and (\ref{hsigma}) we get from equation (\ref{2h'}) 
%
%
\begin{equation}
\sqrt{n_1+n_2}\left( \widehat{\sigma}_\beta -\sigma_0\right)
=-\boldsymbol{J}_{22,\beta }^{-1}\left( \sigma_0\right) \sqrt{n_1+n_2}%
\,\allowbreak _{2}h_{n_1,n_2,\beta }^{\prime }( \boldsymbol{\eta }_0) +o_{p}(1).
\label{sigmaL}
\end{equation}
%
%
%
Applying (\ref{1.1}) and (\ref{EQ:w}) we get
%
%
\begin{align*}
& \lim_{n_1,n_2\rightarrow \infty } \text{$E$}\left[ \sqrt{n_1+n_2}\,\allowbreak _{2}h_{n_1,n_2,\beta
}^{\prime }(\boldsymbol{\eta }_0) \right]  \\
& = \lim_{n_1,n_2\rightarrow \infty } \frac{\sqrt{n_1+n_2}}{n_1+n_2}E\left[ n_1\,\allowbreak
_{2}h_{n_1,\beta }^{\prime }\left( \mu _{10},\sigma_0\right)
+n_2\,\allowbreak _{2}h_{n_2,\beta }^{\prime }\left( \mu _{20},\sigma
_{0}\right) \right]  \\
& = \lim_{n_1,n_2\rightarrow \infty }\sqrt{\frac{n_1}{n_1+n_2}} \lim_{n_1,n_2\rightarrow \infty } E\left[ \sqrt{n_1}\,\allowbreak
_{2}h_{n_1,\beta }^{\prime }(\mu _{10},\sigma_0)\right] \\
& \ \ \ + \lim_{n_1,n_2\rightarrow \infty } \sqrt{\frac{%
n_2}{n_1+n_2}} \lim_{n_1,n_2\rightarrow \infty } E\left[ \sqrt{n_2}\,\allowbreak _{2}h_{n_2,\beta
}^{\prime }\left( \mu _{20},\sigma_0\right) \right] \\
&= 0.
\end{align*}
%
%
%
%
Similarly we also have
%
%
\begin{align*}
& \lim_{n_1,n_2\rightarrow \infty } \text{$Var$}\left[ \sqrt{n_1+n_2}\,\allowbreak
_{2}h_{n_1,n_2,\beta }^{\prime }(\boldsymbol{\eta }_0) \right]  \\
& =\lim_{n_1,n_2\rightarrow \infty } (n_1+n_2)\text{$Var$}\left[ \frac{1}{n_1+n_2}\left(
n_1\,\allowbreak _{2}h_{n_1,\beta }^{\prime }(\mu _{10},\sigma
_{0})+n_2\,\allowbreak _{2}h_{n_2,\beta }^{\prime }(\mu _{20},\sigma
_{0}\right) \right]  \\
& = \lim_{n_1,n_2\rightarrow \infty } \frac{n_1}{n_1+n_2} \lim_{n_1,n_2\rightarrow \infty } \text{$Var$}\left[ \sqrt{n_1}\,\allowbreak
_{2}h_{n_1,\beta }^{\prime }(\mu _{10},\sigma_0)\right] \\
& \ \ \ + \lim_{n_1,n_2\rightarrow \infty } \frac{n_2}{%
n_1+n_2} \lim_{n_1,n_2\rightarrow \infty } \text{$Var$}\left[ \sqrt{n_2}\,\allowbreak _{2}h_{n_2,\beta
}^{\prime }(\mu _{20},\sigma_0)\right] \\
&= w \boldsymbol{K}_{22,\beta }(\sigma_0) + (1-w) \boldsymbol{K}_{22,\beta }(\sigma_0) \\
&= \boldsymbol{K}_{22,\beta }(\sigma_0) .
\end{align*}%
%
%
%
%
Hence,
\begin{equation*}
\sqrt{n_1+n_2}\,\allowbreak _{2}h_{n_1,n_2,\beta }^{\prime }\left(
\boldsymbol{\eta }_0\right) \underset{n_1,n_2\rightarrow
\infty }{\overset{\mathcal{L}}{\longrightarrow }}\mathcal{N}\left(
0,\boldsymbol{K}_{22,\beta }(\sigma_0)\right) .
\end{equation*}%
%
%
Now, from equation (\ref{sigmaL}) we get
%
%
\begin{equation}
\sqrt{n_1+n_2}\left( \widehat{\sigma}_\beta -\sigma_0\right)
\underset{n_1,n_2\rightarrow \infty }{\overset{\mathcal{L}}{%
\longrightarrow }}\mathcal{N}\left( 0,\boldsymbol{K}_{22,\beta }(\sigma_0)\boldsymbol{J}_{22,\beta
}^{-2}(\sigma_0)\right) ,
\label{sigma1}
\end{equation}%
%
%
where
%
%
\begin{equation}
\boldsymbol{K}_{22,\beta }(\sigma_0)\boldsymbol{J}_{22,\beta }^{-2}(\sigma_0)=\sigma_0^{2}%
\frac{\left( \beta +1\right) ^{5}}{\left( \beta ^{2}+2\right) ^{2}}\left(
\frac{4\beta ^{2}+2}{(1+2\beta )^{5/2}}-\frac{\beta ^{2}}{(1+\beta )^{3}}%
\right).  \label{KJ2}
\end{equation}%
%
%
As $\boldsymbol{J}_{12,\beta }(\sigma_0)=\boldsymbol{J}_{21,\beta }(\sigma_0)=0$, it is clear that
\begin{equation*}
 \lim_{n_1,n_2\rightarrow \infty } \left. \frac{\partial^2 }{\partial \mu_1 \partial \sigma}\,\allowbreak h_{n_1,n_2,\beta
}( \boldsymbol{\eta })\right\vert_{\boldsymbol{\eta } = \boldsymbol{\eta }_0} 
= \lim_{n_1,n_2\rightarrow \infty } \left. \frac{\partial^2 }{\partial \mu_2 \partial \sigma}\,\allowbreak h_{n_1,n_2,\beta
}( \boldsymbol{\eta }) \right\vert_{\boldsymbol{\eta } = \boldsymbol{\eta }_0}
=0.
\end{equation*}
Therefore, $Cov(\widehat{\mu }_{1\beta },\widehat{\sigma}_{\beta })=Cov(\widehat{\mu }_{2\beta },\widehat{\sigma}_{\beta })=0$. 
Moreover, $Cov(\widehat{\mu }_{1\beta },\widehat{\mu }_{2\beta })=0$. Combining the results in (\ref{distmu}) and (\ref{sigma1})
we get the variance-covariance matrix of $\sqrt{\frac{n_1n_2}{n_1+n_2}}\widehat{\boldsymbol{\eta }}_{\beta }$ as follows
%
%
\begin{equation*}
\boldsymbol{\Sigma }_{w,\beta }(\sigma_0)=\left(
\begin{array}{ccc}
\left( 1-w\right) \boldsymbol{K}_{11,\beta }\boldsymbol{(}\sigma_0) \boldsymbol{J}_{11,\beta
}^{-2}\left( \sigma_0\right)  & 0 & 0 \\
0 & w \boldsymbol{K}_{11,\beta }\boldsymbol{(}\sigma_0) \boldsymbol{J}_{11,\beta }^{-2}(\sigma_0)
& 0 \\
0 & 0 & w\left( 1-w\right) \boldsymbol{K}_{22,\beta }(\sigma_0)\boldsymbol{J}_{22,\beta
}^{-2}\left( \sigma_0\right)
\end{array}%
\right) ,
\end{equation*}
where the values of the diagonal elements are given in (\ref{KJ1}) and (\ref{KJ2}). 
Hence, the theorem is proved.
\hspace*{\fill}${\blacksquare }$
\bigskip

\noindent \textbf{Proof of Theorem \ref{Th1}}: A Taylor expansion of $%
d_{\gamma }(f_{\widehat{\mu }_{1\beta },\widehat{\sigma}_\beta },f_{%
\widehat{\mu }_{2\beta },\widehat{\sigma}_\beta })$ around $\boldsymbol{\eta }_{0}$ gives%
%
%
\begin{equation*}
d_{\gamma }(f_{\widehat{\mu }_{1\beta },\widehat{\sigma}_\beta },f_{%
\widehat{\mu }_{2\beta },\widehat{\sigma}_\beta })=d_{\gamma }(f_{\mu
_{10},\sigma_0},f_{\mu _{20},\sigma_0})+\boldsymbol{t}_{\gamma
}^{T}\left( \boldsymbol{\eta }_{0}\right) (\widehat{\boldsymbol{\eta }}%
_{\beta }-\boldsymbol{\eta }_{0})+o_{p}\left( \left\Vert \widehat{%
\boldsymbol{\eta }}_{\beta }-\boldsymbol{\eta }_{0}\right\Vert \right),
\end{equation*}
%
%
where $\boldsymbol{t}_{\gamma }\left( \boldsymbol{\eta }_{0}\right) =\frac{%
\partial }{\partial \boldsymbol{\eta }}\left. d_{\gamma }(f_{\mu_1,\sigma
},f_{\mu_2,\sigma })\right\vert _{\boldsymbol{\eta }=\boldsymbol{\eta }%
_{0}}$;
%
%
the expressions of the components $t_{\gamma, i}\left( \boldsymbol{\eta }%
_{0}\right) $, $i=1,2,3$, are given in (\ref{t1})-(\ref{t3}). Hence, the result directly follows from Theorem \ref{Th0}.
\hspace*{\fill}${\blacksquare }$
\bigskip

\noindent \textbf{Proof of Theorem \ref{Th2}}: If $\mu _{10}=\mu _{20}$, it is obvious that
$d_{\gamma }(f_{\mu_{10},\sigma_0},f_{\mu _{20},\sigma_0})=0$, and $\boldsymbol{t}_{\gamma } ( \boldsymbol{\eta }_{0})=0$.
Hence, a second order Taylor expansion of $d_{\gamma }(f_{\widehat{\mu }_{1\beta },\widehat{\sigma}_\beta },f_{%
\widehat{\mu }_{2\beta },\widehat{\sigma}_\beta })$ around $\boldsymbol{\eta }_{0}$ gives
%
\begin{equation}
2d_{\gamma }(f_{\widehat{\mu }_{1\beta },\widehat{\sigma}_\beta },f_{%
\widehat{\mu }_{2\beta },\widehat{\sigma}_\beta })=(\widehat{\boldsymbol{%
\eta }}_{\beta }-\boldsymbol{\eta }_{0})^{T}\boldsymbol{A}_{\gamma }\left(
\sigma_0\right) (\widehat{\boldsymbol{\eta }}_{\beta }-\boldsymbol{\eta }%
_{0})+o_p(\left\Vert \widehat{\boldsymbol{\eta }}_{\beta }-\boldsymbol{\eta }%
_{0}\right\Vert ^{2}),
\label{gam}
\end{equation}%
%
%
where $\boldsymbol{A}_{\gamma }(\sigma_0)$ is the matrix containing the second
derivatives of $d_{\gamma }(f_{\mu_1,\sigma },f_{\mu_2,\sigma })\ $%
evaluated at $\mu_{10}=\mu_{20}$. It can be shown that
%
%
\begin{equation*}
\boldsymbol{A}_{\gamma }\left( \sigma_0\right) \boldsymbol{=}\ell
_{\gamma }(\sigma_0)\left(
\begin{array}{ccc}
1 & -1 & 0 \\
-1 & 1 & 0 \\
0 & 0 & 0%
\end{array}%
\right) ,
\end{equation*}
%
%
where
%
%
\begin{equation*}
\ell _{\gamma }(\sigma_0)=\sigma ^{-(\gamma +2)}\left( 2\pi \right) ^{-%
\frac{\gamma }{2}}\left( \gamma +1\right) ^{-\frac{1}{2}}.
\end{equation*}
%
%
Therefore, equation (\ref{gam}) simplifies to
%
%
\begin{equation*}
2d_{\gamma }(f_{\widehat{\mu }_{1\beta },\widehat{\sigma}_\beta },f_{%
\widehat{\mu }_{2\beta },\widehat{\sigma}_\beta })=\left(
\begin{pmatrix}
\widehat{\mu }_{1\beta } \\
\widehat{\mu }_{2\beta }%
\end{pmatrix}%
-%
\begin{pmatrix}
\mu _{10} \\
\mu _{20}%
\end{pmatrix}%
\right) ^{T}\boldsymbol{A}_{\gamma }^{\ast }\left( \sigma_0\right) \left(
\begin{pmatrix}
\widehat{\mu }_{1\beta } \\
\widehat{\mu }_{2\beta }%
\end{pmatrix}%
-%
\begin{pmatrix}
\mu _{10} \\
\mu _{20}%
\end{pmatrix}%
\right) +o_{p}\left( \left\Vert \widehat{\boldsymbol{\eta }}_{\beta }-%
\boldsymbol{\eta }_{0}\right\Vert ^{2}\right) ,
\end{equation*}
%
%
where
%
%
\begin{equation*}
\boldsymbol{A}_{\gamma }^{\ast }\left( \sigma_0\right) =\ell _{\gamma
}(\sigma_0)\left(
\begin{array}{cc}
1 & -1 \\
-1 & 1%
\end{array}%
\right) .
\end{equation*}
%
%
From Theorem \ref{Th0} we know that
%
%
\begin{equation*}
\sqrt{\frac{n_1n_2}{n_1+n_2}}\left(
\begin{pmatrix}
\widehat{\mu }_{1\beta } \\
\widehat{\mu }_{2\beta }%
\end{pmatrix}%
-%
\begin{pmatrix}
\mu _{10} \\
\mu _{20}%
\end{pmatrix}%
\right) ^{T}\underset{}{\overset{\mathcal{L}}{\longrightarrow }}\mathcal{N}%
\left( \boldsymbol{0}_{2},\boldsymbol{\Sigma }_{w,\beta }^{\ast }(\sigma
_{0})\right),
\end{equation*}
%
%
where
%
%
\begin{equation*}
\boldsymbol{\Sigma }_{w,\beta }^{\ast }(\sigma_0)=K_{11,\beta }%
\boldsymbol{(}\sigma_0)J_{11,\beta }^{-2}\left( \sigma_0\right) \left(
\begin{array}{cc}
1-w & 0 \\
0 & w%
\end{array}
\right).
\end{equation*}
%
%
Therefore, $\frac{2 n_1n_2}{n_1+n_2} d_{\gamma }(f_{\widehat{\mu }_{1\beta },\widehat{\sigma }_{\beta
}},f_{\widehat{\mu }_{2\beta },\widehat{\sigma}_\beta })$ has the same
asymptotic distribution (see \citealp{MR801686}) as the random variable
%
%
\begin{equation*}
\sum\limits_{i=1}^{2}\lambda _{i,\beta ,\gamma }(\sigma_0)Z_{i}^{2},
\end{equation*}
%
%
where $Z_{1}$ and $Z_{2}$ are independent standard normal variables, and
\begin{equation*}
\lambda _{1,\beta ,\gamma }(\sigma_0)=0\text{, and }\lambda _{2,\beta
,\gamma }(\sigma_0)=K_{11,\beta }\boldsymbol{(}\sigma_0)J_{11,\beta
}^{-2}\left( \sigma_0\right) \ell _{\gamma }(\sigma_0)=\lambda _{\beta
,\gamma }(\sigma_0)
\end{equation*}%
are the eigenvalues of the matrix $\boldsymbol{\Sigma }_{w,\beta }^{\ast
}(\sigma_0)\boldsymbol{A}_{\gamma }^{\ast }\left( \sigma_0\right) $.
Hence,
%
%
\begin{equation*}
\frac{2n_1n_2}{n_1+n_2}\frac{d_{\gamma
}(f_{\widehat{\mu }_{1\beta },\widehat{\sigma}_\beta },f_{\widehat{\mu }%
_{2\beta },\widehat{\sigma}_\beta })}{\lambda _{\beta ,\gamma
}\,\allowbreak (\sigma_{0} )} \underset{%
n_1,n_2\rightarrow \infty }{\overset{\mathcal{L}}{\longrightarrow }}\chi
^{2}(1).
\end{equation*}
%
%
Finally, since $\widehat{\sigma}_\beta $ is a consistent estimator of $%
\sigma $, replacing $\lambda _{\beta ,\gamma }(\sigma_0)$ by $\lambda
_{\beta ,\gamma }(\widehat{\sigma}_\beta )$ and by following 
Slutsky's theorem we obtain the desired result.
\hspace*{\fill}${\blacksquare }$

\end{document}